\documentclass{article}

\usepackage{microtype}
\usepackage{graphicx}
\usepackage{subcaption}
\usepackage{booktabs} 
\usepackage[table]{xcolor}

\usepackage{hyperref}

\usepackage[linesnumbered,ruled,lined]{algorithm2e}

\usepackage[accepted]{icml2026}

\usepackage{amsmath}
\usepackage{amssymb}
\usepackage{mathtools}
\usepackage{amsthm}
\usepackage{xspace}  
\usepackage{multirow}
\usepackage{tcolorbox}
\usepackage{enumitem}
\usepackage{makecell}

\usepackage[capitalize,noabbrev]{cleveref}

\theoremstyle{plain}
\newtheorem{theorem}{Theorem}[section]
\newtheorem{proposition}[theorem]{Proposition}

\newtheorem{corollary}[theorem]{Corollary}
\theoremstyle{definition}

\newtheorem{assumption}[theorem]{Assumption}
\theoremstyle{remark}
\newtheorem{remark}[theorem]{Remark}

\usepackage[textsize=tiny]{todonotes}

\newcommand{\ourmetric}{\textsc{LM-CC}\xspace}

\icmltitlerunning{Rethinking Code Complexity Through the Lens of Large Language Models}

\begin{document}

\twocolumn[
  \icmltitle{Rethinking Code Complexity Through the Lens of Large Language Models}



  \icmlsetsymbol{equal}{*}

  \begin{icmlauthorlist}
    \icmlauthor{Chen Xie}{sjtu}
    \icmlauthor{Xiaodong Gu}{sjtu}
    \icmlauthor{Yuling Shi}{sjtu}
    \icmlauthor{Beijun Shen}{sjtu}
  \end{icmlauthorlist}

  \icmlaffiliation{sjtu}{School of Computer Science, Shanghai Jiao Tong University, Shanghai, China}

  \icmlcorrespondingauthor{Beijun Shen}{bjshen@sjtu.edu.cn}

  \icmlkeywords{Code Complexity Metrics, Large Language Models, Machine Perception of Code, Semantic Compositional Hierarchy}

  \vskip 0.3in
]



\printAffiliationsAndNotice{}  

\begin{abstract}
Code complexity metrics such as cyclomatic complexity have long been used to assess software quality and maintainability. With the rapid advancement of large language models (LLMs) on coding tasks, an important yet underexplored question arises: \textit{do traditional complexity metrics meaningfully characterize the coding difficulty that LLMs perceive?} 
In this work, we empirically demonstrate that classical complexity metrics exhibit no consistent correlation with LLM performance, revealing a fundamental mismatch with model-perceived difficulty.
To address this gap, we propose \ourmetric, a novel code complexity metric tailored for LLMs, grounded in the hypothesis that model-perceived code difficulty is fundamentally driven by semantic nonlinearity.
\ourmetric quantifies complexity through an entropy-guided semantic compositional hierarchy, capturing the cumulative uncertainty encountered by LLMs during code understanding.
Our experimental results demonstrate that \ourmetric exhibits strong and consistent partial correlations with LLM performance, while semantics-preserving reductions in \ourmetric consistently lead to improved downstream task performance.
The source code is available at: \url{https://github.com/xchen121/lm-cc}.
\end{abstract}

\section{Introduction}

\begin{figure}[t]
    \centering
    \includegraphics[width=\columnwidth]{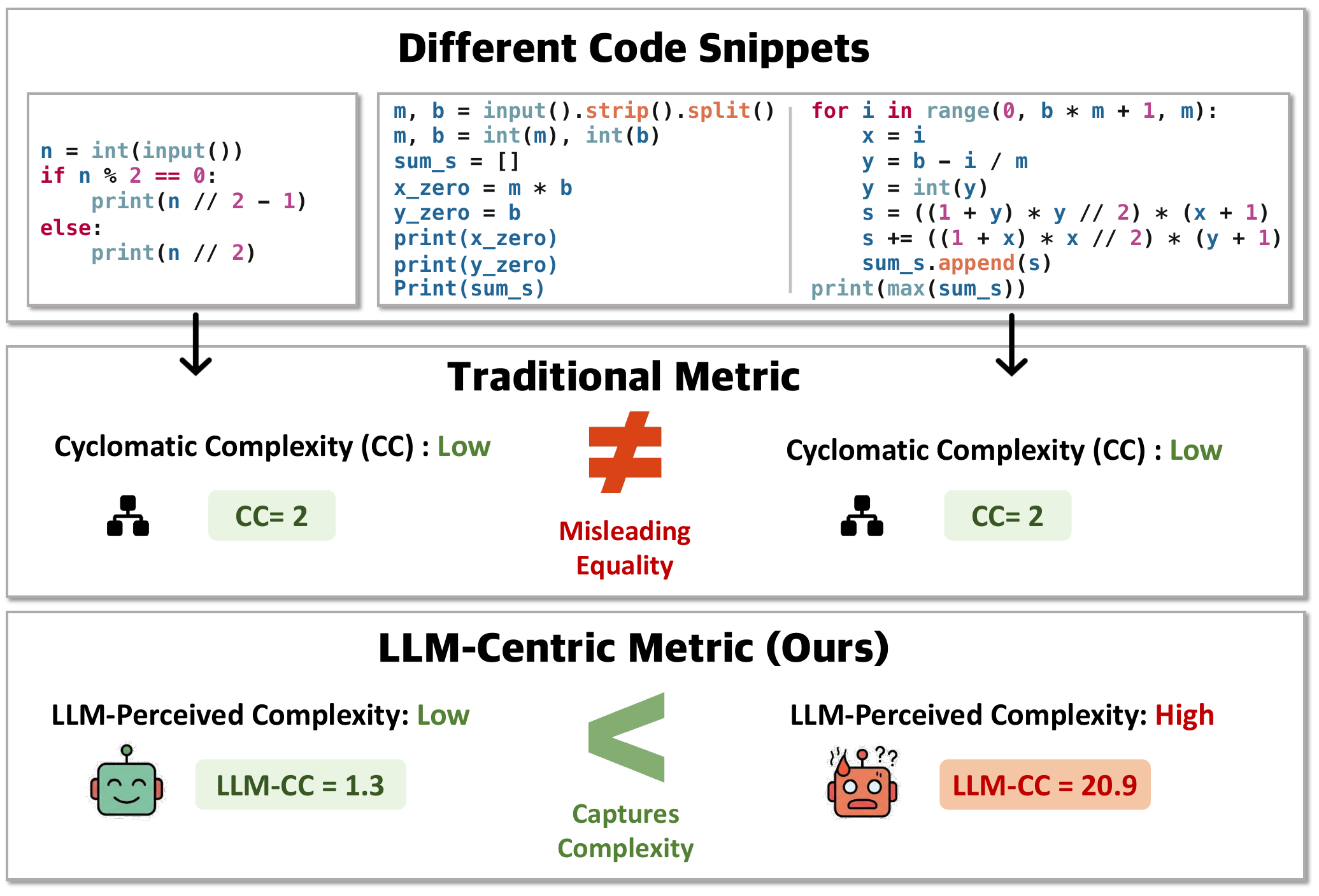}
    \caption{Comparison between Cyclomatic Complexity (CC) and our proposed \ourmetric. While CC assigns identical values to code snippets with significantly different cognitive loads for LLMs (top), \ourmetric effectively distinguishes them by capturing the model uncertainty on nonlinear code semantics (bottom).}
    \label{fig:motivation}
\end{figure}

Understanding and quantifying code complexity has long been a fundamental problem in software engineering~\citep{Weyuker88, AjamiWF19, Feitelson23}. Traditional complexity metrics, such as cyclomatic complexity~\citep{mccabe1976complexity}, were originally designed to characterize structural properties of programs, including control-flow diversity, nesting depth, and logical branching. These metrics serve as quantitative proxies for the human effort required to comprehend, test, and maintain software systems. Consequently, they have been extensively adopted in a wide range of software engineering applications, including refactoring guidance~\citep{GirjoabaC24}, maintainability assessment~\citep{BenarochL23}, and defect prediction~\citep{AlqadiM20}.

Recently, LLMs have achieved remarkable progress across diverse code intelligence tasks, including code completion, generation, summarization, translation, and reasoning~\citep{jiang2025survey,chen2021evaluating,shi2024between}. 
Despite these advances, the notion of code complexity itself has rarely been reconsidered from the perspective of LLMs. 
Existing code complexity metrics primarily characterize program structure from an execution-centric perspective, while largely abstracting away from the representational and inferential mechanisms through which LLMs process code. However, LLMs interpret programs not as executable structures, but as token sequences embedded in high-dimensional semantic spaces. 
For example, code snippets with identical cyclomatic complexity can elicit markedly different performance from LLMs (\cref{fig:motivation}), indicating a potential mismatch between measured complexity and model-perceived difficulty.
This discrepancy raises a critical and underexplored question: \emph{do traditional code complexity metrics meaningfully capture the difficulty that LLMs experience when processing code?}

Answering this question is nontrivial. LLMs operate on code as linear token sequences encoded in distributed representations, and their difficulty arises from structural nonlinearity across composite and branching semantic units, thereby degrading probabilistic inference~\citep{du2025context, shi2025longcodezip, LeVPLNB25}. Consequently, the aspects of code that challenge LLMs may diverge substantially from those emphasized by traditional complexity measures. A systematic investigation of this potential misalignment is therefore essential for accurately characterizing LLM-perceived code complexity and for informing the design of more effective LLM-based code intelligence methods.

In this paper, we conduct an in-depth empirical study on the relationship between traditional code complexity metrics and LLM performance across multiple code-related tasks. To isolate structural complexity from trivial confounding factors, we perform partial correlation analysis with code length controlled as a covariate. Our results show that commonly used complexity metrics exhibit no statistically reliable correlation with LLM performance once length effects are removed, revealing a fundamental gap between classical complexity measures and LLM-perceived difficulty.

Motivated by this observation, we introduce \ourmetric, a novel metric designed to quantify \emph{LLM-perceived Code Complexity}. Our central hypothesis is that the primary source of difficulty for LLMs in code understanding arises from the \emph{nonlinearity of semantic units} inherent in programs. 
Unlike natural language, which is typically interpreted in a largely sequential fashion, source code exhibits intrinsically nonlinear semantic structures~\citep{anand2024critical}. Branching, nesting, and scoping introduce discontinuities in semantic flow, requiring models to reason over multiple hierarchically organized semantic paths rather than a single linear context.

Importantly, LLM difficulty is not determined merely by the presence of nonlinear constructs, but by how such constructs compose and interact. While isolated control-flow elements may pose limited challenges, deeply nested or highly branching structures can substantially increase model uncertainty. This suggests that LLM-perceived complexity depends jointly on compositional level and the density of decision points introduced through structural branching.

To capture these properties, we represent source code using a structured semantic abstraction. Specifically, we decompose programs into semantic units based on entropy (i.e., the surprise that LLMs receive during reading) and organize them into a compositional hierarchy that reflects their jumps and dependencies during LLM processing.
Based on this representation, \ourmetric measures two complementary dimensions of semantic nonlinearity: \emph{compositional level}, reflecting hierarchical nesting, and \emph{branching factor}, reflecting structural fan-out. 
Each semantic unit is assigned a local complexity score derived from its compositional level and derived sub-units, and these scores are recursively aggregated across the hierarchy. 
Intuitively, deeper semantic hierarchies demand more complex multi-level reasoning, while broader branching structures introduce more frequent points of semantic divergence.

We evaluate \ourmetric through extensive experiments across diverse code understanding and generation tasks. The results show that \ourmetric exhibits strong and consistent partial correlations with LLM performance (Spearman’s $r$ ranging from $-0.92$ to $-0.97$) across multiple datasets after controlling for code length, substantially outperforming traditional metrics. 
Moreover, reducing \ourmetric through semantics-preserving code rewriting consistently improves downstream task performance, with gains of up to 20.9\%, demonstrating its practical utility as an effective optimization target.

In summary, this paper makes the following contributions:
\begin{itemize}[itemsep=1pt, topsep=2pt]
    \item We present the first empirical study of code complexity metrics for LLMs, showing that existing metrics fail to reliably capture model-perceived difficulty.
    
    
    \item We propose \ourmetric, a novel code complexity metric explicitly grounded in the perspective of large language models. 
    
    \item We demonstrate the effectiveness and practical utility of \ourmetric through extensive experiments on diverse code-related tasks. 
\end{itemize}

\section{Revisiting Current Complexity Metrics}
\label{sec:existing-metrics}

We begin by systematically re-examining existing code complexity metrics in the context of LLM-based code tasks. Specifically, we investigate whether these metrics correlate with model performance and thereby reflect model-perceived code difficulty.

\subsection{Experimental Setups}
\textbf{Datasets and Tasks}.
We conduct experiments on two widely used benchmark datasets covering three representative code-related tasks: program repair, code translation, and code execution reasoning.
xCodeEval~\citep{khan2024xcodeeval} is a large-scale multilingual and multitask benchmark for code understanding and generation. 
From this benchmark, we consider two tasks:
(1) \emph{program repair} (xCodeEval-apr), where the model is given buggy code and is required to generate a corrected implementation that passes all test cases; and
(2) \emph{code translation} (xCodeEval-ct), where the model translates Python programs into C.
Following prior work, we use filtered subsets consisting of 271 problems for program repair and 535 problems for code translation.
We additionally evaluate on \textsc{HumanEval}~\citep{chen2021evaluating} under the \emph{code execution reasoning} setting~\citep{liu2026codemind}. In this task, the model is required to infer the correct program output given a code snippet and its corresponding input. The evaluation set contains 162 manually curated programming problems.




\textbf{Model Configurations.}
We use \textsc{DeepSeek-V3}\footnote{https://api.deepseek.com} for all experiments, with task-specific hyperparameters configured according to the settings used in the original papers.

\textbf{Evaluation Metric.}
We adopt \emph{pass@1} as the evaluation metric for all tasks.
Pass@1 measures the probability that a single model-generated solution correctly solves a given problem by passing all associated test cases.

\textbf{Correlation Analysis.}
To assess the effectiveness of existing complexity metrics, we analyze their correlation with LLM performance across tasks.
Following~\citet{liu2026codemind}, we perform subgroup-based correlation analysis by binning samples into approximately ten equal-sized groups and computing group-level Spearman correlations using median feature values and average task scores.
Further details are provided in Appendix~\ref{sec:correlation-analysis}.

We report both zero-order and partial correlations. The zero-order correlation measures the monotonic association between two variables without controlling for confounding factors, and is computed as the Pearson correlation over ranked observations:
\begin{equation}
    r_{xy} = \frac{\sum_{i=1}^n (R_i - \bar{R})(S_i - \bar{S})}{\sqrt{\sum_{i=1}^n (R_i - \bar{R})^2} \sqrt{\sum_{i=1}^n (S_i - \bar{S})^2}},
    \end{equation}
where $R_i = R(x_i)$ and $S_i = R(y_i)$ denote the ranks of variables $x$ and $y$, and $\bar{R}$ and $\bar{S}$ are their mean ranks.
To account for the confounding effect of code length, we further compute the partial correlation coefficient $r_{xy \cdot z}$, which quantifies the association between $x$ and $y$ while controlling for $z$:
\begin{equation}
r_{xy \cdot z} = \frac{r_{xy} - r_{xz} \cdot r_{yz}}{\sqrt{1-r_{xz}^2} \cdot \sqrt{1-r_{yz}^2}}.
\end{equation}

\textbf{Complexity Metrics}. We assess the following representative code complexity metrics: 
\begin{itemize}[itemsep=0pt, topsep=1pt]
    \item \textbf{Cyclomatic Complexity (CC)}~\citep{mccabe1976complexity}: quantifies the number of linearly independent execution paths in a program by analyzing its control-flow graph.
    \item \textbf{Halstead Complexity (HC)}~\citep{1977Elements}: measures program complexity via the distribution of operators and operands. 
    Among its derived measures, we focus on \emph{Halstead Difficulty}, which serves as a proxy for program comprehension effort.
    \item \textbf{Maintainability Index (MI)}~\citep{OMAN1994251}: combines Halstead Volume, cyclomatic complexity, lines of code, and comment percentage into a single predictive measure of software maintainability.
    \item \textbf{Cognitive Complexity (CoC)}~\citep{campbell2018cognitive}: estimates code understandability by penalizing nested control structures and nonlinear control flow.    
\end{itemize}

\subsection{Preliminary Results}
\label{sec:preliminary-results}

\begin{table}[t]
    \centering
    \caption{Correlation between traditional code complexity metrics and pass@1 for code tasks using \textsc{DeepSeek-V3}. We report both zero-order correlation and partial correlation while controlling for code length. ``--'' indicates non-significant results ($p \geq 0.05$).}
    \footnotesize
    \label{tab:primary_results}
    \begin{tabular}{l l cc}
        \toprule
        \multirow{2}{*}{\textbf{Metric}} 
        & \multirow{2}{*}{\textbf{Code Task}} 
        & \multicolumn{2}{c}{\textbf{Correlation}} \\
        \cmidrule{3-4}
        & & \textbf{Zero-order} & \textbf{Partial} \\
        \midrule

        \multirow{3}{*}{CC}
            & Program Repair        & - & - \\
            & Code Translation      & - & -0.73 \\
            & Execution Reasoning   & -0.78 & -0.67 \\

        \midrule
        \multirow{3}{*}{HC}
            & Program Repair        & -0.97 & -0.93 \\
            & Code Translation      & -0.90 & -0.70 \\
            & Execution Reasoning   & - & - \\

        \midrule
        \multirow{3}{*}{MI}
            & Program Repair        & - & - \\
            & Code Translation      & - & - \\
            & Execution Reasoning   & - & -0.72 \\
            
        \midrule
        \multirow{3}{*}{CoC}
            & Program Repair        & -0.82 & - \\
            & Code Translation      & - & - \\
            & Execution Reasoning   & - & - \\

        \bottomrule
    \end{tabular}
    \vspace{-8pt}
    \end{table}

Table~\ref{tab:primary_results} summarizes the relationship between traditional code complexity metrics and LLM performance, reporting both zero-order correlations and length-controlled partial correlations. An en dash (``--'') denotes statistically non-significant results.
Across all three tasks, none of the existing metrics exhibits stable or consistently significant associations with LLM performance after controlling for code length.

\emph{Maintainability Index (MI)} achieves a moderate partial correlation ($r=-0.72$) only for the code execution reasoning task, while \emph{Cognitive Complexity (CoC)} exhibits a relatively strong zero-order correlation ($r=-0.79$) in code translation. However, neither metric demonstrates statistically reliable correlations across the remaining tasks, suggesting limited generalizability to model-perceived code difficulty.

Similarly, \emph{Halstead Complexity (HC)} shows strong zero-order correlations in program repair and code translation ($-0.97$ and $-0.90$, respectively), while \emph{Cyclomatic Complexity (CC)} attains a zero-order correlation of $-0.78$ in execution reasoning. Nevertheless, these correlations substantially weaken once code length is controlled for through partial correlation analysis. Moreover, neither HC nor CC yields consistently significant partial correlations across tasks, indicating that these metrics fail to robustly capture the structural characteristics that challenge LLM reasoning.





The systematic disappearance of correlations under length control is not incidental.
Rather, it reflects a fundamental limitation shared by existing metrics.
These metrics are grounded in assumptions about explicit control-flow analysis, human cognitive load, and long-term maintainability, whereas LLMs process programs as token sequences encoded in learned representations.
For LLMs, difficulty arises from nonlinear semantic composition and long-range hierarchical dependencies that are not explicitly modeled by traditional complexity measures.
This empirical decoupling highlights the need for a fundamentally different notion of code complexity, explicitly aligned with the representational and computational characteristics of large language models.

\section{Proposed Metric: \ourmetric}
\label{sec:new-metric}

\subsection{Overview}

Motivated by the limitations of existing metrics, we introduce \ourmetric, a principled measure of code complexity tailored to large language models. By organizing entropy-aligned semantic units into a hierarchical representation, \ourmetric goes beyond surface syntactic abstractions and explicitly encodes the semantic composition and branching relations that govern program semantics and shape LLM uncertainty. While isolated nonlinear constructs are often tractable in isolation, their hierarchical composition induces long-range semantic dependencies and parallel execution contexts, which compound and substantially amplify uncertainty during inference. Consequently, LLM-perceived code complexity cannot be adequately captured by any single structural statistic, but instead arises from the joint interaction between the depth of semantic hierarchies and the density of structurally induced decision points.

Guided by this formulation, we construct \ourmetric through a three-stage framework. 
First, we decompose programs into semantic units via a hybrid entropy–structure mechanism, and organize them into a semantic compositional hierarchy that explicitly encodes nonlinear program semantics (\S~\ref{sec:hierarchy}). 
Second, we extract a set of candidate structural features from this representation—such as compositional level and branching factor—and identify those most predictive of LLM performance through empirical analysis (\S~\ref{sec:features}). 
Finally, we formalize \ourmetric as a weighted aggregation of the selected features (\S~\ref{sec:metric}).
The complete algorithmic procedure is provided in Appendix~\ref{sec:algorithm}.

Beyond the empirical formulation, we further develop a theoretical interpretation connecting the structure encoded by \ourmetric to the inference dynamics of autoregressive language models. This analysis provides insight into why \ourmetric more effectively characterizes LLM-perceived code difficulty than conventional complexity measures (\S~\ref{sec:theory-main}). 

\subsection{Hierarchical Semantic Decomposition}
\label{sec:hierarchy}

\begin{figure*}[t]
    \centering
    \includegraphics[width=0.9\textwidth]{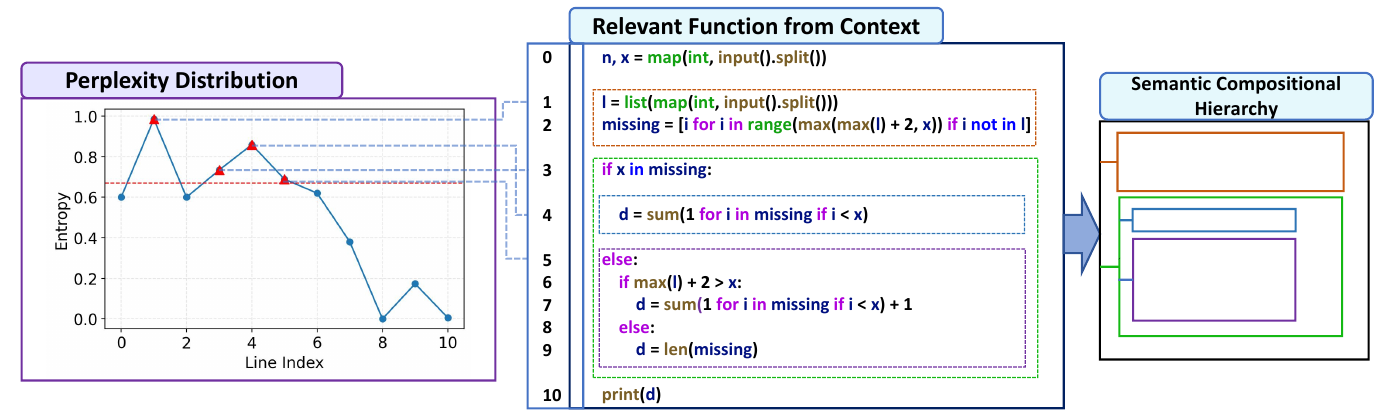}
    \caption{Hierarchical semantic decomposition example. Left: source code with token-entropy annotations, where color-coded regions indicate elevated LLM uncertainty. Right: the induced hierarchical semantic representation, with elements color-aligned to their semantic units of source code.}
    \label{fig:hierarchy_example}
\end{figure*}

To characterize code complexity from the perspective of large language models, we propose a structured semantic representation of source code. Programs are decomposed into entropy-aligned semantic units and recursively organized into a semantic compositional hierarchy, enabling the representation to capture how nonlinear program structures induce hierarchical dependencies during LLM inference.

Our decomposition strategy is guided by token-level entropy, which serves as a proxy for model uncertainty. Prior work by \citet{cooper2024perplexed} shows that high-entropy tokens correspond to positions where language models exhibit low confidence in next-token prediction, indicating increased processing difficulty. Complementary findings from \citet{shi2025longcodezip} further demonstrate that token-level perplexity effectively identifies semantically salient and structurally important regions in source code. These observations motivate the use of entropy signals to identify semantic boundaries from the perspective of the model.

The hierarchical semantic decomposition proceeds in four steps:

\textbf{(1) Code preprocessing.}
We first normalize the source code by removing comments and docstrings, producing a canonical representation that isolates executable program semantics.

\textbf{(2) Entropy computation.}
We compute token-level entropy using a pretrained language model. For each token $t_i$, the entropy is defined as
\begin{equation}
    H(t_i) = -\sum_j p(t_j \mid t_{<i}) \log p(t_j \mid t_{<i}),
\end{equation}
where the summation ranges over the entire vocabulary. Higher entropy values indicate greater predictive uncertainty at the corresponding token position.

\textbf{(3) Semantic unit decomposition.}
Semantic unit boundaries are identified using a hybrid entropy–structure criterion. A boundary is introduced when either (i) the token entropy exceeds a predefined threshold $\tau$, signaling a local increase in model uncertainty, or (ii) an explicit structural delimiter is encountered, such as the termination of a loop, conditional block, or function definition. 

\textbf{(4) Semantic compositional hierarchy construction.}
The resulting semantic units are recursively organized into a semantic compositional hierarchy through nested decomposition.
Units arising at the same compositional stage form branching structures that encode alternative semantic paths.
The final hierarchy therefore captures the nonlinear organization of program semantics in terms of both compositional levels and branching-induced semantic divergence.

Figure~\ref{fig:hierarchy_example} illustrates how localized token-level uncertainty is elevated into a structured semantic hierarchy.
Rather than treating uncertainty as a flat sequence-level signal, \ourmetric integrates entropy signals with structural delimiters to perform hierarchical semantic decomposition, thereby revealing the compositional and branching structures that underlie LLM code processing.

\subsection{Feature Extraction and Selection}
\label{sec:features}

Based on the semantic compositional hierarchy, we derive a set of quantitative features over semantic units to parameterize \ourmetric. 
These features are designed to characterize two fundamental dimensions of semantic nonlinearity that influence LLM reasoning: 
\emph{compositional level}, which captures the depth of semantic composition induced by nesting and scoping structures, and 
\emph{branching factor}, which captures semantic divergence introduced by branching structures.

Specifically, we define the following candidate features: 

\begin{itemize}[itemsep=0pt, topsep=0pt]
    \item \textit{Maximum Compositional Level (MaxCompLevel)}: the maximum compositional level within the hierarchy, representing the deepest nested semantic composition.
    \item \textit{Average Compositional Level (AvgCompLevel)}: the average compositional level across semantic units, reflecting typical hierarchical complexity.
    \item \textit{Total Compositional Level (TotalCompLevel)}: the cumulative sum of compositional levels over all semantic units, measuring the overall burden of hierarchical semantic composition.
    \item \textit{Maximum Branching Factor (MaxBranch)}: the largest branching factor observed in the hierarchy, characterizing the strongest horizontal expansion of semantic alternatives.
    \item \textit{Average Branching Factor (AvgBranch)}: the average branching factor across semantic units, quantifying the typical degree of local semantic divergence.
    \item \textit{Total Branching Factor (TotalBranch)}: the branching factor aggregated over the entire hierarchy, reflecting the global branching complexity.
        
\end{itemize}

To examine how these features relate to LLM-perceived difficulty, we perform partial correlation analysis following the protocol described in Section~\ref{sec:existing-metrics}, while controlling for code length. 
We compute group-level Spearman partial correlations between each feature and LLM task performance across multiple code-related tasks (Table~\ref{tab:main_results}).

\begin{table*}[t]
\centering
\caption{Partial Spearman correlations ($r$) of complexity features and \ourmetric with pass@1 across code tasks using \textsc{DeepSeek-V3}, while controlling for code length. ``--'' denotes non-significant correlations ($p \geq 0.05$).}
\label{tab:main_results}
\resizebox{\textwidth}{!}{
\begin{tabular}{llccccccc}
\toprule
\multirow{2}{*}{\textbf{Entropy Model}} & \multirow{2}{*}{\textbf{Code Task}}&  \multicolumn{3}{c}{\textbf{Compositional Level}} & \multicolumn{3}{c}{\textbf{Branching Factor}} & \multirow{2}{*}{\textbf{\ourmetric}}\\
        \cmidrule(lr){3-5} \cmidrule(lr){6-8}
& & \textbf{MaxCompLevel}  & \textbf{AvgCompLevel}  & \textbf{TotalCompLevel} & \textbf{MaxBranch} & \textbf{AvgBranch} & \textbf{TotalBranch} & \\

\midrule
\multirow{4}{*}{CodeLlama-7b}
& Program Repair      & - & - & -0.73 & -  & -  & -  & \textbf{-0.93} \\
& Code Translation    & - & - & -0.95 & -0.85 & - & -0.94 & \textbf{-0.97} \\
& Execution Reasoning & -0.72 & -0.79 & -0.84 & -0.69 & -0.75     & -0.85 & \textbf{-0.92} \\
& Code Summarization & - & - & -0.85 & -0.85 & - & -0.88 & \textbf{-0.90} \\

\midrule
\multirow{4}{*}{Deepseek-Coder-6.7b}
& Program Repair      & -0.80 & - & - & - & - & -0.79 & \textbf{-0.88} \\
& Code Translation    & - & - & -0.93 & -0.86 & - & -0.89 & \textbf{-0.94} \\
& Execution Reasoning & -0.75 & -0.81 & -0.87 & -0.72 & -0.79 & -0.86 & \textbf{-0.93} \\
& Code Summarization  & - & - & -0.86 & -0.87 & - & -0.90 & \textbf{-0.91} \\

\midrule
\multirow{4}{*}{Phi-3-mini-4k-instruct}
& Program Repair      & - & - & \textbf{-0.87} & -0.78 & - & - & -0.80 \\
& Code Translation    & - & - & \textbf{-0.95} & -0.88 & - & -0.83 & -0.88 \\
& Execution Reasoning & - & -0.77 & - & -0.89 & -0.81 & -0.90 & \textbf{-0.91} \\
& Code Summarization  & - & - & -0.92 & -0.90 & -0.69 & \textbf{-0.95} & -0.90 \\

\midrule
\multirow{4}{*}{Qwen2.5-Coder-1.5B}
& Program Repair      & - & -0.77 & -0.85 & - & - & - & \textbf{-0.87} \\
& Code Translation    & - & - & -0.85 & - & -0.86 & -0.82 & \textbf{-0.95} \\
& Execution Reasoning & - & -0.73 & \textbf{-0.95} & - & -0.78 & -0.92 & -0.84 \\
& Code Summarization  & - & -0.73 & -0.84 & -0.77 & - & -0.79 & \textbf{-0.85} \\

\midrule
\rowcolor{gray!12} \multicolumn{2}{c}{\textit{Significant Correlation Ratio}} & 3 / 16 & 6 / 16 & 14 / 16 & 11 / 16 & 6 / 16 & 13 / 16 & \textbf{16 / 16} \\

\bottomrule
\end{tabular}
}
\end{table*}

The results reveal clear task-dependent sensitivities to different forms of semantic nonlinearity. 
For program repair, \textit{TotalCompLevel} exhibits consistently strong negative partial correlations ($-0.73$ for CodeLlama-7b, $-0.87$ for Phi-3-mini-4k-instruct, and $-0.85$ for Qwen2.5-Coder-1.5B), indicating that increasingly deep semantic composition substantially degrades model performance. 
In contrast, code translation and execution reasoning are jointly influenced by compositional depth and branching complexity. 
For code translation, both \textit{TotalCompLevel} and \textit{TotalBranch} achieve moderate-to-strong correlations across all evaluated entropy models. 
Similarly, in execution reasoning, \textit{AvgCompLevel}, \textit{AvgBranch}, and \textit{TotalBranch} all demonstrate statistically significant correlations with model performance.

Despite these strong task-specific effects, no individual feature consistently exhibits robust predictive power across all task-model combinations, as each feature fails to achieve statistical significance under certain settings. 
To assess overall robustness, we compute the proportion of task-model combinations in which each feature achieves statistical significance. 
Among all candidates, the global measures \textit{TotalCompLevel} ($14/16$) and \textit{TotalBranch} ($13/16$) attain the highest coverage within the two core dimensions of semantic nonlinearity---\emph{compositional level} and \emph{branching factor}, respectively. 
These results suggest that cumulative semantic composition and global semantic divergence provide the most stable and representative characterization of LLM-perceived code complexity.


\subsection{Metric Definition}
\label{sec:metric}
Building on the complementary yet task-dependent behaviors of the extracted semantic features, we define \ourmetric as a unified measure of \emph{LLM-perceived code complexity} that jointly captures the two principal dimensions of semantic nonlinearity: hierarchical semantic composition and branching-induced semantic divergence.

Formally, given a semantic compositional hierarchy, \ourmetric is defined as a weighted combination of \emph{TotalBranch} and \emph{TotalCompLevel}. 
A weighting parameter $\alpha \in [0,1]$ controls the relative contribution of branching complexity and compositional complexity. 
The formulation naturally admits an equivalent per-unit decomposition, which makes explicit how local semantic effects accumulate across the hierarchy to produce global complexity.

\begin{equation}
\begin{aligned}
    \ourmetric 
    &= \alpha \cdot \text{TotalBranch} + (1-\alpha) \cdot \text{TotalCompLevel}, \\
    & =\sum_{v \in V} \left[ \alpha \cdot b(v) + (1-\alpha) \cdot d(v) \right],
\end{aligned}
\end{equation}
where $V$ denotes the set of semantic units in the compositional hierarchy, while $b(v)$ and $d(v)$ denote the branching factor and compositional level associated with unit $v$, respectively.

From an information-theoretic perspective, compositional level enlarges the contextual span over which branching-induced uncertainty must be propagated, while branching increases the number of competing semantic alternatives that the model must resolve.
\ourmetric captures the interaction between these two sources of complexity through a unified weighted formulation, enabling a stable and interpretable characterization of semantic nonlinearity across diverse programming tasks.


Notably, the semantic compositional hierarchy itself is constructed around regions of elevated token entropy, thereby grounding \ourmetric in the model’s intrinsic predictive uncertainty.
Unlike traditional complexity measures that are derived primarily from syntactic abstractions, \ourmetric directly incorporates model-sensitive semantic signals inferred from LLM behavior. 
Consequently, it provides a more faithful and cognitively aligned representation of code complexity from the perspective of large language models.

\subsection{Theoretical Justification}
\label{sec:theory-main}

The preceding empirical results show that \ourmetric correlates strongly with LLM performance even after controlling for code length, whereas traditional metrics such as cyclomatic complexity (CC) lose predictive power.  We now provide a theoretical account for this behavior.

Our central premise is that LLM predictive entropy accumulates systematically along two fundamental dimensions of program semantics: hierarchical semantic composition and branching-induced semantic divergence. 
These dimensions are explicitly modeled by \ourmetric, yet are only weakly characterized by traditional complexity measures such as CC.
Formally, CC measures the number of decision points through $\mathrm{CC}(c) = E - N + 2P$ (where $E$, $N$, and $P$ denote the numbers of edges, nodes, and connected components of the control-flow graph) without encoding the hierarchy and interaction of program semantics.
This distinction is formalized in Proposition~\ref{prop:cc-separation} (Appendix~\ref{sec:separation}), which shows that two programs with identical CC can exhibit $\Theta(n)$ versus $\Theta(n^2)$ growth in \ourmetric depending on whether their semantic structures are flat or deeply nested. This separation explains why CC loses predictive power under length control, whereas \ourmetric remains strongly associated with LLM difficulty.

To formalize this intuition, we analyze how predictive entropy accumulates over semantic hierarchies. 
Our analysis is based on two empirically motivated assumptions. 
Importantly, the significance of these assumptions lies not primarily in the specific penalty forms themselves, but in the theoretical consequences they enable: namely, the formal separation from CC (Proposition~\ref{prop:cc-separation}) and the monotonic relationship between \ourmetric and structural entropy (Corollary~\ref{cor:lmcc-proxy}).

\begin{assumption}[Depth-Dependent Context Degradation]
\label{assump:context}
Predictive uncertainty increases monotonically with compositional depth. For a semantic unit $v$ at depth $d(v)$, the entropy penalty satisfies $\Delta_{\text{depth}}(v) \geq \delta \cdot \max(0, d(v) - d^*)$, where $\delta > 0$ is the per-level degradation rate and $d^*$ is the critical depth beyond which context degradation occurs.
\end{assumption}

This assumption is supported by empirical findings in long-context LLM reasoning, including the ``lost in the middle'' phenomenon~\citep{liu2024lost}, attention sink effects~\citep{xiao2024efficient}, and broader intelligence degradation under extended contexts~\citep{wang2026intelligence}. 

\begin{assumption}[Branching-Induced Uncertainty]
\label{assump:branch}
At semantic divergence points with branching factor $b \geq 2$, the model incurs additional predictive uncertainty satisfying $\Delta H_{\text{branch}} \geq \gamma \cdot (b-1)$ for some $\gamma > 0$.
\end{assumption}

This assumption is motivated by entropy-aware reasoning~\citep{li2025entropy} and empirical studies consistently showing higher error rates in branching and nested code structures~\citep{sepidband2025enhancing,basu2025nestful,hu2025dynacode}. Intuitively, at a branch point the model's predictive distribution becomes a mixture over semantically divergent continuations, with each additional branch increasing the uncertainty that must be resolved.

\begin{theorem}[Hierarchical Entropy Accumulation]
\label{thm:entropy-accumulation}
Let $\mathcal{T} = (V, E)$ be a semantic hierarchy. Under Assumptions~\ref{assump:context} and~\ref{assump:branch}, the total predictive entropy satisfies
\begin{equation}
H_{\text{total}}(c) \geq \sum_{v \in V} H_0(v) + \Phi(\mathcal{T}),
\end{equation}
where $H_0(v)$ is the baseline entropy under ideal context access, and $\Phi(\mathcal{T}) = \delta \sum_{v} \max(0, d(v) - d^*) + \gamma \sum_{v} (b(v) - 1)^+$ is the structural penalty.
\end{theorem}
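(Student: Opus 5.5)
The plan is to make the bound a bookkeeping argument: fix a modeling convention for how total entropy splits over semantic units, then add the two per-unit penalties. First I will fix the decomposition. Each token of $c$ belongs to exactly one semantic unit of $\mathcal{T}$, because the boundaries from the hybrid entropy--structure criterion partition the token sequence. Hence $H_{\text{total}}(c)=\sum_{t_i} H(t_i)=\sum_{v\in V} H(v)$, where $H(v)$ is the sum of the token entropies inside $v$. For each unit I will write $H(v)=H_0(v)+\Delta(v)$. Here $H_0(v)$ is the entropy the unit would incur under ideal context access, at shallow depth and with no pending divergence, and $\Delta(v)\ge 0$ is the excess caused by the unit's actual position in the hierarchy. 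The whole proof then reduces to showing $\Delta(v)\ge \delta\max(0,d(v)-d^*)+\gamma(b(v)-1)^+$ for every $v$, and summing over $V$.

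Second, I will bound $\Delta(v)$ by splitting it into a depth part and a branching part, $\Delta(v)\ge \Delta_{\text{depth}}(v)+\Delta H_{\text{branch}}(v)$. Assumption~\ref{assump:context} gives $\Delta_{\text{depth}}(v)\ge \delta\max(0,d(v)-d^*)$ directly. For the branching part there are two cases. If $b(v)\ge 2$, Assumption~\ref{assump:branch} gives $\Delta H_{\text{branch}}(v)\ge\gamma(b(v)-1)$. If $b(v)\le 1$, the term $(b(v)-1)^+$ is $0$ and I only need nonnegativity of the excess. Summing the per-unit inequality over $V$ yields $H_{\text{total}}(c)\ge\sum_v H_0(v)+\Phi(\mathcal{T})$. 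The branching penalty is charged at each divergence node $v$ exactly once: the assumption attaches it to the divergence point, so I will attribute it to the tokens of the unit $v$ that opens the fan-out, not to its children.

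The main obstacle is justifying additivity, that is, that the depth and branching excesses can be added rather than overlapping. They might describe the same uncertainty, for example when a deep branch point's entropy is already counted as depth degradation. My first approach is to read the two assumptions as lower bounds on disjoint contributions. The depth penalty comes from context degradation, namely loss of access to distant conditioning. The branching penalty comes from the mixture over continuations conditioned on the same, possibly degraded, context. I would make this precise with the chain rule. Write the entropy of a unit as conditional entropy given the realized context. Compare it with conditioning on the ideal context, which gives the depth gap. Then condition additionally on the branch selector, which gives the mixture gap, since $H(X\mid C)=H(X\mid C,B)+I(X;B\mid C)$ and $I(X;B\mid C)$ is the branching term. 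Because these two gaps are successive, telescoping differences, they add rather than double count.

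If this decomposition proves too delicate, the fallback is to state explicitly the minimal additional convention the assumptions implicitly require: the penalties in Assumptions~\ref{assump:context} and~\ref{assump:branch} are lower bounds on separate nonnegative components of $\Delta(v)$. Under that convention the theorem follows by linearity of summation.
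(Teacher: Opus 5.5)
Your proposal matches the paper's proof: write each unit's entropy as $H(v)=H_0(v)+\Delta_v$ with $\Delta_v\ge 0$, lower-bound $\Delta_v$ by the depth penalty from Assumption~\ref{assump:context} plus the branching penalty from Assumption~\ref{assump:branch} (the latter only for units with $b(v)\ge 2$), and sum over the units that partition the tokens. The paper justifies adding the two penalties only by calling them ``orthogonal'' (context span vs.\ path resolution), which is exactly your fallback convention; your chain-rule telescoping makes that convention explicit but, like the paper, still needs the assumptions to be read as lower bounds on those specific disjoint increments.
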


\begin{proof}
For each semantic unit $v \in V$, the conditional entropy decomposes as $H(v) = H_0(v) + \Delta_v$ where $\Delta_v \geq 0$. By Assumption~\ref{assump:context}, units at depth $d(v) > d^*$ incur penalty $\Delta_{\text{depth}}(v) \geq \delta(d(v) - d^*)$ due to context inaccessibility. By Assumption~\ref{assump:branch}, internal units with $b(v) \geq 2$ derived sub-units contribute $\Delta_{\text{branch}}(v) \geq \gamma(b(v) - 1)$. Since these penalties address orthogonal aspects (context span vs.\ path resolution), aggregating yields $H_{\text{total}} = \sum_v H(v) \geq \sum_v H_0(v) + \Phi(\mathcal{T})$.
\end{proof}

\begin{corollary}[\ourmetric as Structural Complexity Proxy]
\label{cor:lmcc-proxy}
The structural penalty satisfies $\Phi(\mathcal{T}) \leq \delta \cdot \text{TotalCompLevel} + \gamma \cdot \text{TotalBranch}$, with equality (up to constants) when all units exceed the critical depth. Thus, \ourmetric $= \alpha \cdot \text{TotalBranch} + (1-\alpha) \cdot \text{TotalCompLevel}$ is monotonically related to the structural entropy penalty $\Phi(\mathcal{T})$.
\end{corollary}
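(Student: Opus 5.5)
The plan is to compare the two sums in $\Phi(\mathcal{T})$ term by term with the per-unit summands of TotalCompLevel and TotalBranch. I will work with the per-unit decomposition from Section~\ref{sec:metric}, $\text{TotalCompLevel} = \sum_{v\in V} d(v)$ and $\text{TotalBranch} = \sum_{v \in V} b(v)$. Throughout, compositional levels and branching factors are taken to be nonnegative integers, with $d^* \ge 0$.

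\textbf{Depth term.} For each unit $v$ I will show $\max(0, d(v)-d^*) \le d(v)$. If $d(v) \le d^*$, the left side is $0 \le d(v)$. Otherwise it equals $d(v)-d^* \le d(v)$, since $d^*\ge 0$. Summing over $v$ and multiplying by $\delta>0$ gives $\delta\sum_v \max(0,d(v)-d^*) \le \delta\cdot\text{TotalCompLevel}$.

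\textbf{Branching term.} For each unit, $(b(v)-1)^+ \le b(v)$. It equals $0$ when $b(v)\le 1$ and $b(v)-1$ otherwise. Multiplying by $\gamma>0$ and summing gives the bound by $\gamma\cdot\text{TotalBranch}$. Adding the two bounds yields the claimed inequality.

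\textbf{Equality case.} Suppose every unit satisfies $d(v)>d^*$. Then the depth term is exactly
$$\delta\,\text{TotalCompLevel} - \delta d^*|V|.$$
For the branching term, $(b(v)-1)^+ = b(v) - \min(b(v),1)$. So it equals $\gamma\,\text{TotalBranch} - \gamma\,|\{v : b(v)\ge 1\}|$, which is an affine shift by a quantity bounded by $|V|$. Hence $\Phi$ coincides with $\delta\,\text{TotalCompLevel} + \gamma\,\text{TotalBranch}$ up to additive terms linear in $|V|$. This is how I will read ``equality up to constants'': constants for a fixed number of units, or after normalizing by the number of units.

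\textbf{Monotonicity.} I will take $\alpha = \gamma/(\gamma+\delta)$. Then $\delta\,\text{TotalCompLevel}+\gamma\,\text{TotalBranch} = (\gamma+\delta)\,\ourmetric$, so the upper bound is a positive multiple of \ourmetric. In the supercritical regime, $\Phi$ is an increasing affine function of \ourmetric, with the unit count fixed or controlled. This is the sense in which the paper controls for length.

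\textbf{Main obstacle.} The inequalities themselves are routine. The hard part is making the monotonicity statement precise. For general $\alpha$ and arbitrary hierarchies, $\Phi$ and \ourmetric need not be comonotone, because the weights may differ from $\delta,\gamma$ and subcritical units are truncated. I will therefore state the claim for the matched weight $\alpha=\gamma/(\gamma+\delta)$ in the all-supercritical regime. For general $\alpha$, I will only assert the sandwich
$$\Phi \le \max\!\left(\tfrac{\delta}{1-\alpha},\tfrac{\gamma}{\alpha}\right)\ourmetric,$$
which follows from the per-unit bounds above.
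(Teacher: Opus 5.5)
Your route is the paper's. The per-unit bounds $\max(0,d(v)-d^*)\le d(v)$ and $(b(v)-1)^+\le b(v)$ give the inequality, and the equality case is an exact evaluation of $\Phi$ when every $d(v)>d^*$. Your offset $\gamma\,|\{v: b(v)\ge 1\}|$ is more accurate than the paper's $\gamma|V|$. The paper's version tacitly assumes $b(v)\ge 1$ for every unit, which fails at leaves if $b$ counts derived sub-units. You are also right that monotonicity cannot hold for arbitrary $\alpha$; the paper's proof simply asserts it.

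There is one slip in your monotonicity step. With matched $\alpha=\gamma/(\gamma+\delta)$, your own formula reads $\Phi=(\gamma+\delta)\,\ourmetric-\delta d^*|V|-\gamma\,|\{v: b(v)\ge 1\}|$. So fixing $|V|$ does not fix the offset. Take $d^*=0$ with the top unit at level $1$, and compare two hierarchies:
\begin{itemize}
\item a root with three leaf children, at levels $1,2,2,2$;
\item a root with two children, one of which has a single child, at levels $1,2,2,3$.
\end{itemize}
Both have $|V|=4$ and $\text{TotalBranch}=3$, and the second has the larger \ourmetric. Yet $\Phi$ is $7\delta+2\gamma$ versus $8\delta+\gamma$, so the order flips whenever $\gamma>\delta$. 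You must also hold the number of units with $b(v)\ge 1$ fixed, or adopt the paper's implicit convention that $b(v)\ge 1$ for all $v$.

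Finally, your general-$\alpha$ ``sandwich'' states only the upper half. In the supercritical regime, the same coefficient-by-coefficient comparison gives the lower half: $\Phi\ge\min\bigl(\delta/(1-\alpha),\gamma/\alpha\bigr)\,\ourmetric-(\delta d^*+\gamma)|V|$.
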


\begin{proof}
From $\max(0, d(v) - d^*) \leq d(v)$ and $(b(v)-1)^+ \leq b(v)$, the upper bound follows. When $d(v) > d^*$ for all $v$, we have $\Phi = \delta(\text{TotalCompLevel} - d^*|V|) + \gamma(\text{TotalBranch} - |V|)$, establishing the monotonic relationship.
\end{proof}

We note that the additive form provides a conservative lower bound on the true structural penalty; in particular, deeper branching points naturally receive larger contributions through elevated $d(v)$, implicitly reflecting depth--branching interaction.
Additional theoretical analyses, including formal separation from cyclomatic complexity and proofs of orthogonality to code length, are provided in Appendix~\ref{sec:theory}.

\section{Experiments}
\label{sec:experiments}

We evaluate \ourmetric through both correlational analysis and causal intervention studies.

\subsection{Correlation with Model Performance}
\label{sec:effectiveness}

We first investigate whether \ourmetric effectively characterizes code complexity from the perspective of large language models. Following the datasets and experimental protocol adopted in our empirical evaluation of existing complexity metrics, we compute partial correlation coefficients between \ourmetric and LLM task performance while controlling for code length. 
Token entropy is estimated using four language models spanning different architectures and parameter scales: \textsc{CodeLlama-7b-hf}\footnote{https://huggingface.co/meta-llama/CodeLlama-7b-hf}, \textsc{Phi-3-mini-4k-instruct}\footnote{https://huggingface.co/microsoft/Phi-3-mini-4k-instruct}, \textsc{DeepSeek-Coder-6.7B-base}\footnote{https://huggingface.co/deepseek-ai/deepseek-coder-6.7b-base}, and \textsc{Qwen2.5-Coder-1.5B}\footnote{https://huggingface.co/Qwen/Qwen2.5-Coder-1.5B}. 
The semantic segmentation threshold is fixed at $\tau = 0.67$ following prior work~\citep{wang2025beyond}, while the weighting factor is set to $\alpha = 0.8$ based on the ablation results presented in Appendix~\ref{sec:ablation}. 
Additional results obtained with other task-performing LLMs are reported in Appendix~\ref{sec:main-results-sup}.

\textbf{Overall Effectiveness.}
As shown in the final column of Table~\ref{tab:main_results}, \ourmetric exhibits strong and consistent negative correlations with LLM performance across all evaluated tasks and entropy estimation models. Correlation coefficients range from $-0.80$ to $-0.97$, indicating that higher \ourmetric values are reliably associated with lower task performance.

\textbf{Statistical Robustness Across Settings.}
Although \ourmetric does not achieve absolute numerical superiority under every individual setting, results derived from all four entropy estimation models remain consistently statistically significant across all evaluated tasks. Specifically, \ourmetric achieves statistically significant correlations in all 16 experimental settings, outperforming its two constituent dimensions, \textit{TotalCompLevel} ($14/16$) and \textit{TotalBranch} ($13/16$). These results demonstrate the strong robustness and generalizability of \ourmetric across diverse experimental conditions, which is a critical requirement for a reliable evaluation metric.

\textbf{Cross-Task Generalization.} 
Across diverse code intelligence tasks, \ourmetric consistently demonstrates strong negative correlations with model performance, suggesting that it effectively captures the intrinsic difficulty of code as perceived by large language models. Furthermore, parameter settings calibrated on three tasks, namely program repair, code translation, and execution reasoning, transfer effectively to the code summarization task without additional tuning, further validating the strong cross-task generalization ability of \ourmetric.

\textbf{Effect of Entropy Models.} 
The effectiveness of \ourmetric varies moderately across entropy estimation models. Under evaluation with \textsc{DeepSeek-V3}, \textsc{CodeLlama-7b-hf} and \textsc{DeepSeek-Coder-6.7b-base} consistently achieve the strongest overall performance across all tasks, both in terms of correlation strength and statistical effectiveness. In contrast, \textsc{Phi-3-mini-4k-instruct} achieves the best performance only on the execution reasoning benchmark, while \textsc{Qwen2.5-Coder-1.5B} does not attain the highest absolute correlation coefficient on that task.
We hypothesize that larger base models provide better-calibrated token probability estimates across diverse code patterns, leading to more robust complexity correlations.
By comparison, smaller models may lack sufficient capacity to capture fine-grained complexity variations.

Overall, \ourmetric demonstrates stable and statistically reliable correlations with LLM performance, substantially outperforming traditional complexity measures. 
By jointly integrating compositional level and branching factor within an entropy-guided semantic representation, \ourmetric provides a principled characterization of code complexity from the perspective of large language models.

\subsection{Impact on LLM-Based Code Tasks}
\label{sec:improvement}

\begin{table*}[t]
\centering
\caption{Impact of semantics-preserving complexity reduction on LLM-based code tasks.
Performance (pass@1) is evaluated on the same subset of samples before and after rewriting.}
\label{tab:impact-results}
\begin{tabular}{lcccccc}
\toprule
\multirow{2}{*}{Task} 
& \multicolumn{2}{c}{Performance ($\uparrow$)} 
& \multicolumn{2}{c}{\ourmetric ($\downarrow$)} 
& \multicolumn{2}{c}{Cyclomatic ($\downarrow$)} \\
\cmidrule(lr){2-3} \cmidrule(lr){4-5} \cmidrule(lr){6-7}
& Original & Rewritten 
& Original & Rewritten 
& Original & Rewritten \\
\midrule
Program Repair        & 13.4\% & \textbf{16.2\%}$^{\dagger}$ &  77.7 & \textbf{72.5} & 18.3 & 18.3 \\
Code Translation      & 42.2\% & \textbf{46.5\%}$^{\dagger}$ &  45.9 &  \textbf{40.9} & 11.0 & 11.0 \\
Execution Reasoning   & 94.1\% & \textbf{95.0\%}$^{\dagger}$ &   12.9 & \textbf{9.3} & 4.1 & 4.1 \\
\bottomrule
\end{tabular}

\end{table*}

Beyond correlation analysis, we further assess the practical utility of \ourmetric through a controlled causal intervention: 
\emph{Can reducing \ourmetric directly improve LLM performance on code-related tasks?}
To answer this question, we perform semantics-preserving code rewriting to reduce \ourmetric and examine the resulting changes in downstream task performance.

Specifically, we employ \textsc{DeepSeek-V3.2}~\citep{deepseekai2024deepseekv3} to rewrite code samples under two constraints: 
(1) the rewritten program must preserve functional equivalence by passing all original test cases (except for the program repair task), and 
(2) the rewritten version must achieve a strictly lower \ourmetric than the original code.
To further disentangle the effect of \ourmetric from conventional structural simplifications, we additionally constrain cyclomatic complexity to remain unchanged or increase during rewriting. This ensures that any observed performance gains cannot be attributed to reductions in traditional control-flow complexity measures.
Additional details of the rewriting pipeline are provided in Appendix~\ref{sec:code-rewrite}.

The resulting rewritten dataset therefore contains semantically equivalent programs with systematically reduced \ourmetric values while maintaining comparable traditional structural complexity. We subsequently reevaluate all downstream tasks on the subset of samples admitting valid rewrites, using identical evaluation models and experimental settings before and after rewriting. This paired and controlled setup enables direct analysis of the causal impact of reducing \ourmetric on LLM performance.

As shown in Table~\ref{tab:impact-results}, reducing \ourmetric consistently yields statistically significant performance improvements across all evaluated tasks. Specifically, program repair, code translation, and execution reasoning achieve relative improvements of $20.9\%$ (from $13.4\%$ to $16.2\%$), $10.2\%$ (from $42.2\%$ to $46.5\%$), and $0.96\%$ (from $94.1\%$ to $95.0\%$), respectively.
Notably, the program repair benchmark exhibits the largest improvement after rewriting. This task also has the highest original \ourmetric values and the lowest baseline performance, suggesting that semantically complex code imposes substantially greater reasoning difficulty on language models and therefore offers larger room for improvement when complexity is reduced. In contrast, the execution reasoning benchmark consists primarily of relatively short programs with already near-saturated baseline accuracy, naturally limiting the achievable gains from complexity reduction.


\section{Related Work}

\subsection{LLMs for Code}

Recent years have witnessed rapid progress in LLMs for code. 
Codex~\citep{chen2021evaluating} first demonstrated that scaling language models on large code corpora enables strong performance in code generation and comprehension. 
Subsequently, open-source code LLMs such as Code Llama~\citep{roziere2023codellama}, StarCoder~\citep{lozhkov2024starcoder2}, DeepSeek-Coder~\citep{zhu2024deepseekcoderv2}, and Qwen2.5-Coder~\citep{hui2024qwen25coder} achieved competitive performance across diverse coding benchmarks.

Beyond generation, LLMs have been extensively studied on code-centric tasks including program repair~\citep{muennighoff2023octopack,shi2024code}, code translation~\citep{khan2024xcodeeval,wang2025evoc2rust}, and execution reasoning~\citep{gu2024cruxeval}. 
To support systematic evaluation, numerous benchmarks have been introduced, such as HumanEval~\citep{chen2021evaluating}, xCodeEval~\citep{khan2024xcodeeval}, LiveCodeBench~\citep{jain2024livecodebench}, SWE-bench~\citep{jimenez2024swebench}, and BigCodeBench~\citep{zhuo2025bigcodebench}.

While these works demonstrate strong empirical capabilities of LLMs for code, they primarily focus on performance gains and benchmark comparisons, with limited attention to modeling the intrinsic difficulty of code from the perspective of LLMs. This motivates further investigation into the structural factors underlying LLM performance on code tasks.

\subsection{Code Complexity Metrics}

A broad spectrum of code complexity metrics has been proposed to quantify program difficulty~\citep{HaoHMDLCM26}. Classic examples include \emph{cyclomatic complexity}~\citep{mccabe1976complexity}, which measures the number of linearly independent execution paths via control-flow analysis, and \emph{Halstead’s metrics}~\citep{1977Elements}, which characterize complexity through the distribution of operators and operands. To better approximate human code comprehension, \emph{Cognitive Complexity}~\citep{campbell2018cognitive} penalizes nested control structures and control-flow interruptions. Beyond static analyses, recent work has also leveraged neurophysiological and biometric signals, such as EEG-based measures of cognitive load~\citep{hao2023accuracy} and fMRI-based analyses of brain activation~\citep{9402005}, to assess human mental effort during programming tasks. 

More recently, several studies have explored the relationship between program complexity and LLM behavior. Sepidband et al.~\citep{sepidband2025enhancing} reported correlations between existing complexity metrics and LLM code generation performance and proposed complexity-aware feedback mechanisms. Other work investigated how program structure and complexity affect LLM execution reasoning~\citep{liu2025tool}. Related systems such as EpiCoder~\citep{WangL000HGHX0S025} explicitly control generated code complexity via feature trees, while evaluation frameworks including CodeMind~\citep{liu2026codemind} and DynaCode~\citep{hu2025dynacode} consistently observe performance degradation as code complexity increases.

Despite these advances, existing studies still rely primarily on traditional complexity metrics and implicitly assume that such measures adequately reflect the difficulty encountered by LLMs. No prior work has introduced a complexity metric explicitly designed to model how LLMs perceive and process code. Our work addresses this gap by proposing a model-aware complexity metric grounded in the representational and computational characteristics of large language models.

\section{Conclusion}

In this paper, we show that traditional code complexity metrics do not reliably reflect the difficulty that large language models encounter on code-related tasks. To address this issue, we propose \ourmetric, a model-aware code complexity metric that uses token-level entropy as a proxy for model uncertainty to quantify semantic nonlinearity induced by the compositional and branching structure of semantic units. 
We evaluate LM-CC through extensive experiments across diverse code understanding and generation tasks. The results demonstrate that \ourmetric correlates substantially more strongly with LLM performance than existing metrics. Furthermore, semantics-preserving transformations that reduce \ourmetric consistently improve downstream task performance, highlighting its effectiveness and practical relevance.

\section*{Acknowledgements}
This research is funded by the National Key Research and Development Program of China (Grant No. 2023YFB4503802), the National Natural Science Foundation of China (Grant No. 62232003), and the Natural Science Foundation of Shanghai (Grant No. 25ZR1401175).

\section*{Impact Statement}

In this work, we reveal that traditional code complexity metrics are grounded in human cognitive load and exhibit no consistent correlation with LLM performance. In response, we propose LM-CC, a LLM-centric code complexity metric designed to quantify the difficulty LLMs experience when processing code.
It decomposes programs into entropy-aligned semantic units, arranges them into a compositional hierarchy, and quantifies complexity by aggregating compositional depth and branching-induced divergence.
LM-CC achieves significantly stronger and more consistent correlations with task performance than traditional metrics, providing a unified model-aware code complexity standard.

\nocite{langley00}

\bibliography{ref}

@inproceedings{sepidband2025enhancing,
  author       = {Melika Sepidband and
                  Hamed Taherkhani and
                  Song Wang and
                  Hadi Hemmati},
  title        = {Enhancing {LLM}-Based Code Generation with Complexity Metrics: {A} Feedback-Driven
                  Approach},
  booktitle    = {49th {IEEE} Annual Computers, Software, and Applications Conference,
                  {COMPSAC} 2025, Toronto, ON, Canada, July 8-11, 2025},
  pages        = {1416--1426},
  publisher    = {{IEEE}},
  year         = {2025},
}

@article{mccabe1976complexity,
  title={A complexity measure},
  author={McCabe, Thomas J},
  journal={IEEE Transactions on software Engineering},
  number={4},
  pages={308--320},
  year={1976},
  publisher={IEEE}
}

@techreport{campbell2018cognitive,
  author       = {Campbell, G. Ann},
  title        = {Cognitive Complexity: A New Way of Measuring Understandability},
  institution  = {SonarSource SA},
  year         = {2018},
  type         = {White Paper},
  url          = {https://www.sonarsource.com/docs/CognitiveComplexity.pdf}
}

@article{chen2021evaluating,
  title     = {Evaluating Large Language Models Trained on Code},
  author    = {Chen, Mark and Tworek, Jerry and Jun, Heewoo and Yuan, Qiming and Pinto, Henrique Ponde de Oliveira and Kaplan, Jared and Edwards, Harri and Burda, Yuri and Joseph, Nicholas and Brockman, Greg and others},
  journal   = {arXiv preprint arXiv:2107.03374},
  year      = {2021}
}

@article{muennighoff2023octopack,
  title     = {{OctoPack}: Instruction Tuning Code Large Language Models},
  author    = {Muennighoff, Niklas and Liu, Qian and Zebaze, Armel and Zheng, Qinkai and Hui, Binyuan and Zhuo, Terry Yue and Singh, Swayam and Tang, Xiangru and von Werra, Leandro and Longpre, Shayne},
  journal   = {arXiv preprint arXiv:2308.07124},
  year      = {2023}
}

@inproceedings{khan2024xcodeeval,
  title     = {{XCodeEval}: An Execution-based Large Scale Multilingual Multitask Benchmark for Code Understanding, Generation, Translation and Retrieval},
  author    = {Khan, Mohammad Abdullah Matin and Bari, M. Saiful and Do, Xuan Long and Wang, Weishi and Parvez, Md Rizwan and Joty, Shafiq},
  booktitle = {Proceedings of the 62nd Annual Meeting of the Association for Computational Linguistics (Volume 1: Long Papers)},
  pages     = {6766--6805},
  year      = {2024},
  publisher = {Association for Computational Linguistics}
}

@inproceedings{gu2024cruxeval,
  title     = {{CRUXEval}: A Benchmark for Code Reasoning, Understanding and Execution},
  author    = {Gu, Alex and Rozi{\`e}re, Baptiste and Leather, Hugh and Solar-Lezama, Armando and Synnaeve, Gabriel and Wang, Sida},
  booktitle = {Proceedings of the 41st International Conference on Machine Learning},
  pages     = {16568--16621},
  year      = {2024},
  volume    = {235},
  series    = {Proceedings of Machine Learning Research},
  publisher = {PMLR}
}

@inproceedings{liu2025tool,
  author    = {Liu, Changshu and Jabbarvand, Reyhaneh},
  title     = {A Tool for In-depth Analysis of Code Execution Reasoning of Large Language Models},
  booktitle = {Proceedings of the 33rd ACM International Conference on the Foundations of Software Engineering (FSE Companion)},
  pages     = {1178--1182},
  year      = {2025},
  publisher = {ACM}
}

@inproceedings{hu2025dynacode,
  title     = {{DynaCode}: A Dynamic Complexity-Aware Code Benchmark for Evaluating Large Language Models in Code Generation},
  author    = {Hu, Wenhao and Duan, Jinhao and Wei, Chunchen and Zhang, Li and Zhang, Yue and Xu, Kaidi},
  booktitle = {Findings of the Association for Computational Linguistics: ACL 2025},
  pages     = {21980--21997},
  year      = {2025},
  publisher = {Association for Computational Linguistics}
}

@article{liu2026codemind,
  title={Code{M}ind: Evaluating Large Language Models for Implicit and Explicit Code Execution Reasoning},
  author={Liu, Changshu and Chen, Yang and Jabbarvand, Reyhaneh},
  journal={IEEE Transactions on Software Engineering},
  year={2026},
  publisher={IEEE}
}

@inproceedings{zhuo2025bigcodebench,
  title     = {{BigCodeBench}: Benchmarking Code Generation with Diverse Function Calls and Complex Instructions},
  author    = {Zhuo, Terry Yue and Vu, Minh Chien and Chim, Jenny and Hu, Han and Yu, Wenhao and Widyasari, Ratnadira and Yusuf, Imam Nur Bani and Zhan, Haolan and He, Junda and Paul, Indraneil and Brunner, Simon and Gong, Chen and Hoang, Thong and Zebaze, Armel Randy and Hong, Xiaoheng and Li, Wen-Ding and Kaddour, Jean and Xu, Ming and Zhang, Zhihan and Yadav, Prateek and Jain, Naman and Gu, Alex and Cheng, Zhoujun and Liu, Jiawei and Liu, Qian and Wang, Zijian and Lo, David and Hui, Binyuan and Muennighoff, Niklas and Fried, Daniel and Du, Xiaoning and de Vries, Harm and von Werra, Leandro},
  booktitle = {Proceedings of the 13th International Conference on Learning Representations},
  year      = {2025}
}

@article{jain2024livecodebench,
  title     = {{LiveCodeBench}: Holistic and Contamination Free Evaluation of Large Language Models for Code},
  author    = {Jain, Naman and Han, King and Gu, Alex and Li, Wen-Ding and Yan, Fanjia and Zhang, Tianjun and Wang, Sida and Solar-Lezama, Armando and Sen, Koushik and Stoica, Ion},
  journal   = {arXiv preprint arXiv:2403.07974},
  year      = {2024}
}

@article{zhu2024deepseekcoderv2,
  title     = {{DeepSeek-Coder-V2}: Breaking the Barrier of Closed-Source Models in Code Intelligence},
  author    = {Zhu, Qihao and Guo, Daya and Shao, Zhihong and Yang, Dejian and Wang, Peiyi and Xu, Runxin and Wu, Y. and Li, Yukun and Gao, Huazuo and Ma, Shirong and others},
  journal   = {arXiv preprint arXiv:2406.11931},
  year      = {2024}
}

@article{hui2024qwen25coder,
  title     = {{Qwen2.5-Coder} Technical Report},
  author    = {Hui, Binyuan and Yang, Jian and Cui, Zeyu and Yang, Jiaxi and Liu, Dayiheng and Zhang, Lei and Liu, Tianyu and Zhang, Jiajun and Yu, Bowen and Lu, Keming and others},
  journal   = {arXiv preprint arXiv:2409.12186},
  year      = {2024}
}

@article{roziere2023codellama,
  title     = {{Code Llama}: Open Foundation Models for Code},
  author    = {Rozi{\`e}re, Baptiste and Gehring, Jonas and Gloeckle, Fabian and Sootla, Sten and Gat, Itai and Tan, Xiaoqing Ellen and Adi, Yossi and Liu, Jingyu and Remez, Tal and Rapin, J{\'e}r{\'e}my and others},
  journal   = {arXiv preprint arXiv:2308.12950},
  year      = {2023}
}

@article{lozhkov2024starcoder2,
  title     = {{StarCoder 2} and {The Stack v2}: The Next Generation},
  author    = {Lozhkov, Anton and Li, Raymond and Allal, Loubna Ben and Cassano, Federico and Lamy-Poirier, Joel and Tazi, Nouamane and Tang, Ao and Pykhtar, Dmytro and Liu, Jiawei and Wei, Yuxiang and others},
  journal   = {arXiv preprint arXiv:2402.19173},
  year      = {2024}
}

@inproceedings{jimenez2024swebench,
  title     = {{SWE-bench}: Can Language Models Resolve Real-World {GitHub} Issues?},
  author    = {Jimenez, Carlos E. and Yang, John and Wettig, Alexander and Yao, Shunyu and Pei, Kexin and Press, Ofir and Narasimhan, Karthik},
  booktitle = {Proceedings of the 12th International Conference on Learning Representations},
  year      = {2024}
}

@inproceedings{shi2025longcodezip,
  author       = {Yuling Shi and
                  Yichun Qian and
                  Hongyu Zhang and
                  Beijun Shen and
                  Xiaodong Gu},
  title        = {{LongCodeZip}: Compress Long Context for Code Language Models},
  booktitle    = {40th {IEEE/ACM} International Conference on Automated Software Engineering,
                  {ASE} 2025, November 16-20, 2025},
  pages        = {141--153},
  publisher    = {{IEEE}},
  year         = {2025},
}

@inproceedings{shi2024code,
  title     = {From Code to Correctness: Closing the Last Mile of Code Generation with Hierarchical Debugging},
  author    = {Shi, Yuling and Wang, Songsong and Wan, Chengcheng and Wang, Min and Gu, Xiaodong},
  booktitle    = {2026 IEEE/ACM 47th International Conference on Software Engineering (ICSE)},
  year      = {2026}
}

@inproceedings{shi2024between,
  title        = {Between Lines of Code: Unraveling the Distinct Patterns of Machine and Human Programmers},
  author       = {Shi, Yuling and Zhang, Hongyu and Wan, Chengcheng and Gu, Xiaodong},
  booktitle    = {2025 IEEE/ACM 47th International Conference on Software Engineering (ICSE)},
  pages        = {51--62},
  year         = {2025},
  organization = {IEEE Computer Society}
}

@inproceedings{wang2025evoc2rust,
  title={{EvoC2Rust}: A Skeleton-guided Framework for Project-Level C-to-Rust Translation},
  author={Wang, Chaofan and Yu, Tingrui and Xie, Chen and Wang, Jie and Chen, Dong and Zhang, Wenrui and Shi, Yuling and Gu, Xiaodong and Shen, Beijun},
  booktitle    = {2026 IEEE/ACM 47th International Conference on Software Engineering (ICSE) SEIP},
  year={2026}
}

@inproceedings{anand2024critical,
  title     = {A Critical Study of What Code-{LLMs} (Do Not) Learn},
  author    = {Anand, Abhinav and Verma, Shweta and Narasimhan, Krishna and Mezini, Mira},
  booktitle = {Findings of the Association for Computational Linguistics: ACL 2024},
  pages     = {15869--15889},
  year      = {2024},
  publisher = {Association for Computational Linguistics}
}

@article{cooper2024perplexed,
  title     = {Perplexed: Understanding When Large Language Models are Confused},
  author    = {Cooper, Nathan A. and Scholak, Torsten},
  journal   = {arXiv preprint arXiv:2404.06634},
  year      = {2024}
}

@article{jiang2025survey,
  title     = {A Survey on Large Language Models for Code Generation},
  author    = {Jiang, Juyong and Wang, Fan and Shen, Jiasi and Kim, Sungju and Kim, Sunghun},
  journal   = {ACM Transactions on Software Engineering and Methodology},
  year      = {2025},
  publisher = {ACM}
}

@inproceedings{du2025context,
  title     = {Context Length Alone Hurts {LLM} Performance Despite Perfect Retrieval},
  author    = {Du, Yufeng and Tian, Minyang and Ronanki, Srikanth and Rongali, Subendhu and Bodapati, Sravan Babu and Galstyan, Aram and Wells, Azton and Schwartz, Roy and Huerta, Eliu A. and Peng, Hao},
  booktitle = {Findings of the Association for Computational Linguistics: EMNLP 2025},
  pages     = {23281--23298},
  year      = {2025},
  publisher = {Association for Computational Linguistics}
}

@article{deepseekai2024deepseekv3,
  title     = {{DeepSeek-V3} Technical Report},
  author    = {{DeepSeek-AI}},
  journal   = {arXiv preprint arXiv:2412.19437},
  year      = {2024}
}

@article{Weyuker88,
  author       = {Elaine J. Weyuker},
  title        = {Evaluating Software Complexity Measures},
  journal      = {{IEEE} Trans. Software Eng.},
  volume       = {14},
  number       = {9},
  pages        = {1357--1365},
  year         = {1988},
}

@article{AjamiWF19,
  author       = {Shulamyt Ajami and
                  Yonatan Woodbridge and
                  Dror G. Feitelson},
  title        = {Syntax, predicates, idioms - what really affects code complexity?},
  journal      = {Empir. Softw. Eng.},
  volume       = {24},
  number       = {1},
  pages        = {287--328},
  year         = {2019},
}

@article{Feitelson23,
  author       = {Dror G. Feitelson},
  title        = {From Code Complexity Metrics to Program Comprehension},
  journal      = {Commun. {ACM}},
  volume       = {66},
  number       = {5},
  pages        = {52--61},
  year         = {2023},
}

@inproceedings{GirjoabaC24,
  author       = {Andrei V. Girjoaba and
                  Andrea Capiluppi},
  title        = {Refactoring Legacy Code Using Cleaning Up Cycles: An Experience Report},
  booktitle    = {{IEEE} International Conference on Software Maintenance and Evolution,
                  {ICSME} 2024, Flagstaff, AZ, USA, October 6-11, 2024},
  pages        = {753--764},
  publisher    = {{IEEE}},
  year         = {2024},
}

@article{BenarochL23,
  author       = {Michel Benaroch and
                  Kalle Lyytinen},
  title        = {How Much Does Software Complexity Matter for Maintenance Productivity?
                  The Link Between Team Instability and Diversity},
  journal      = {{IEEE} Trans. Software Eng.},
  volume       = {49},
  number       = {4},
  pages        = {2459--2475},
  year         = {2023},
}

@inproceedings{AlqadiM20,
  author       = {Basma S. Alqadi and
                  Jonathan I. Maletic},
  title        = {Slice-Based Cognitive Complexity Metrics for Defect Prediction},
  booktitle    = {27th {IEEE} International Conference on Software Analysis, Evolution
                  and Reengineering, {SANER} 2020, London, ON, Canada, February 18-21,
                  2020},
  pages        = {411--422},
  publisher    = {{IEEE}},
  year         = {2020},

}

@inproceedings{WangL000HGHX0S025,
  author       = {Yaoxiang Wang and
                  Haoling Li and
                  Xin Zhang and
                  Jie Wu and
                  Xiao Liu and
                  Wenxiang Hu and
                  Zhongxin Guo and
                  Yangyu Huang and
                  Ying Xin and
                  Yujiu Yang and
                  Jinsong Su and
                  Qi Chen and
                  Scarlett Li},
  title        = {{EpiCoder}: Encompassing Diversity and Complexity in Code Generation},
  booktitle    = {Forty-second International Conference on Machine Learning, {ICML}
                  2025, Vancouver, BC, Canada, July 13-19, 2025},
  publisher    = {OpenReview.net},
  year         = {2025},
}

@article{HaoHMDLCM26,
  author       = {Hao Gao and
                  Haytham Hijazi and
                  J{\'{u}}lio Medeiros and
                  Jo{\~{a}}o Dur{\~{a}}es and
                  Chan{-}Tong Lam and
                  Paulo de Carvalho and
                  Henrique Madeira},
  title        = {Complementarity in software code complexity metrics},
  journal      = {J. Syst. Softw.},
  volume       = {232},
  pages        = {112679},
  year         = {2026},
}

@book{1977Elements,
  title     = {Elements of Software Science},
  author    = {Halstead, M. H.},
  year      = {1977},
  publisher = {Elsevier Science Inc.},
  series    = {Operating and Programming Systems Series}
}

@INPROCEEDINGS{9402005,
  author={Peitek, Norman and Apel, Sven and Parnin, Chris and Brechmann, André and Siegmund, Janet},
  booktitle={2021 IEEE/ACM 43rd International Conference on Software Engineering (ICSE)}, 
  title={Program Comprehension and Code Complexity Metrics: An f{MRI} Study}, 
  year={2021},
  pages={524-536},
}

@article{hao2023accuracy,
  title   = {On the Accuracy of Code Complexity Metrics: A Neuroscience-Based Guideline for Improvement},
  author  = {Hao, Guozhu and Hijazi, Hadi and Dur{\~a}es, Jo{\~a}o and Medeiros, Jo{\~a}o and Couceiro, Rui and Lam, Chi T. and Teixeira, Carlos and Castelhano, Jorge and Castelo-Branco, Miguel and Carvalho, Paulo and others},
  journal = {Frontiers in Neuroscience},
  volume  = {16},
  pages   = {1065366},
  year    = {2023}
}

@article{OMAN1994251,
title = {Construction and testing of polynomials predicting software maintainability},
journal = {Journal of Systems and Software},
volume = {24},
number = {3},
pages = {251-266},
year = {1994},
author = {Paul Oman and Jack Hagemeister}
}

@article{wang2025beyond,
  title={Beyond the 80/20 rule: High-entropy minority tokens drive effective reinforcement learning for {LLM} reasoning},
  author={Wang, Shenzhi and Yu, Le and Gao, Chang and Zheng, Chujie and Liu, Shixuan and Lu, Rui and Dang, Kai and Chen, Xionghui and Yang, Jianxin and Zhang, Zhenru and others},
  journal={arXiv preprint arXiv:2506.01939},
  year={2025}
}

@article{liu2024lost,
  title={Lost in the Middle: How Language Models Use Long Contexts},
  author={Liu, Nelson F. and Lin, Kevin and Hewitt, John and Paranjape, Ashwin and Bevilacqua, Michele and Petroni, Fabio and Liang, Percy},
  journal={Transactions of the Association for Computational Linguistics},
  volume={12},
  pages={157--173},
  year={2024},
  publisher={MIT Press}
}

@inproceedings{xiao2024efficient,
  title={Efficient Streaming Language Models with Attention Sinks},
  author={Xiao, Guangxuan and Tian, Yuandong and Chen, Beidi and Han, Song and Lewis, Mike},
  booktitle={Proceedings of the 12th International Conference on Learning Representations},
  year={2024}
}

@article{li2025entropy,
  title={Entropy-gated branching for efficient test-time reasoning},
  author={Li, Xianzhi and Callanan, Ethan and Ghassel, Abdellah and Zhu, Xiaodan},
  journal={arXiv preprint arXiv:2503.21961},
  year={2025}
}

@article{wang2026intelligence,
  title={Intelligence Degradation in Long-Context LLMs: Critical Threshold Determination via Natural Length Distribution Analysis},
  author={Wang, Weiwei and Min, Jiyong and Zou, Weijie},
  journal={arXiv preprint arXiv:2601.15300},
  year={2026}
}

@inproceedings{basu2025nestful,
  title={Nestful: A benchmark for evaluating llms on nested sequences of api calls},
  author={Basu, Kinjal and Abdelaziz, Ibrahim and Kate, Kiran and Agarwal, Mayank and Crouse, Maxwell and Rizk, Yara and Bradford, Kelsey and Munawar, Asim and Kumaravel, Sadhana and Goyal, Saurabh and others},
  booktitle={Proceedings of the 2025 Conference on Empirical Methods in Natural Language Processing},
  pages={33526--33535},
  year={2025}
}

@inproceedings{LeVPLNB25,
  author       = {Cuong Chi Le and
                  Hoang Chau Truong Vinh and
                  Huy Nhat Phan and
                  Dung Duy Le and
                  Tien N. Nguyen and
                  Nghi D. Q. Bui},
  title        = {{VisualCoder}: Guiding Large Language Models in Code Execution with
                  Fine-grained Multimodal Chain-of-Thought Reasoning},
  booktitle    = {Findings of the Association for Computational Linguistics: {NAACL}
                  2025, Albuquerque, New Mexico, USA, April 29 - May 4, 2025},
  series       = {Findings of {ACL}},
  volume       = {{NAACL} 2025},
  pages        = {6628--6645},
  publisher    = {Association for Computational Linguistics},
  year         = {2025},

}
\bibliographystyle{icml2026}

\newpage
\appendix
\onecolumn
{\LARGE \textbf{Appendix}}
\section{Pseudocode of \ourmetric}
\label{sec:algorithm}

Algorithm~\ref{algo:metric} presents the computation procedure of \ourmetric.
Given a source code snippet, \ourmetric first preprocesses the input and computes token-level entropy using a pretrained language model (Lines~1--4).
Semantic unit boundaries are then identified using two complementary signals: positions where model uncertainty exceeds the threshold $\tau$, and syntactic delimiters such as loops or function boundaries (Lines~5--9). This hybrid criterion allows the segmentation to capture both model-perceived difficulty and program structure.
The algorithm subsequently constructs a semantic compositional hierarchy through a BFS traversal (Lines~10--15). Starting from a top-level unit, each unit range is recursively partitioned according to detected boundaries and indentation structure, and the resulting sub-units are enqueued for further processing.
After constructing the hierarchy, \ourmetric is computed by aggregating contributions from all semantic units (Lines~16--19), where each unit is weighted according to its branching factor and compositional level. Consequently, semantically deeper and more branching structures receive higher complexity scores.

\begin{algorithm}[ht]
\caption{Computation of \ourmetric}
\label{algo:metric}
\small
\KwIn{Source code $P$, entropy threshold $\tau$, smoothing constant $\alpha$}
\KwOut{Complexity score $\ourmetric(P)$}

$P' \gets \textsc{RemoveComments}(P)$\;
$(t_1, \ldots, t_n) \gets \textsc{Tokenize}(P')$\;

\ForEach{token $t_i$}{
    $H(t_i) \gets -\sum_{j} p(t_j \mid t_{<i}) \log p(t_j \mid t_{<i})$\;
}

$\mathcal{B} \gets \emptyset$\;
\ForEach{position $i$}{
    \If{$H(t_i) > \tau$ \textbf{or} $t_i$ is syntactic boundary}{
        Mark $i$ as unit boundary\;
    }
}
Partition $P'$ into semantic units $\mathcal{B} = \{b_1, \ldots, b_m\}$\;

$T \gets$ empty hierarchy with top unit $r$\;
$Q \gets$ queue initialized with $(r, 0, |\mathcal{B}|-1)$ 
\While{$Q$ is not empty}{
    $(v, s, e) \gets Q.\textsc{Dequeue}()$\;
    Partition semantic units $\mathcal{B}[s:e]$ by boundaries and indentation\;
    \ForEach{contiguous unit range $[s', e']$}{
        Create compositional sub-unit $u$ for units $\mathcal{B}[s':e']$\;
        Add $u$ as lower-level unit of $v$\;
        $Q.\textsc{Enqueue}((u, s', e'))$\;
    }
}

$\ourmetric(P) \gets 0$\;
\ForEach{unit $v \in T$}{
    $\ourmetric(P) \gets \ourmetric(P) + \alpha \cdot b(v) + (1-\alpha) \cdot d(v)$\;
}

\Return{$\ourmetric(P)$}
\end{algorithm}


\section{Extended Theoretical Analysis}
\label{sec:theory}

This section extends the theoretical analysis presented in Section~\ref{sec:theory-main}. We first introduce the formal setup and notation (\S~\ref{sec:formal-setup}), and then present additional theoretical results on the separation of \ourmetric from traditional complexity measures (\S~\ref{sec:separation}) and the justification for entropy-driven semantic segmentation (\S~\ref{sec:entropy-seg}).

\subsection{Formal Setup and Notation}
\label{sec:formal-setup}

Let $\mathcal{M}$ be an autoregressive language model processing a code token sequence $c = (t_1, t_2, \ldots, t_n)$. The predictive entropy at position $i$ is $H_i := H(t_i \mid t_{<i}) = -\sum_{v \in \mathcal{V}} P_{\mathcal{M}}(v \mid t_{<i}) \log P_{\mathcal{M}}(v \mid t_{<i})$, where $\mathcal{V}$ denotes the vocabulary.

A semantic unit $u = (t_l, \ldots, t_r)$ is a maximal contiguous subsequence satisfying: (i) for all $i \in (l, r)$, $H_i \leq \tau$ and $t_i$ is not a syntactic delimiter; and (ii) either $H_l > \tau$, $l = 1$, or $t_{l-1}$ is a syntactic delimiter. The collection of semantic units partitions the code sequence.

A semantic hierarchy $\mathcal{T} = (V, E)$ represents the compositional organization of program semantics, where each element $v \in V$ corresponds to a semantic unit and relations in $E$ encode syntactic composition between units. The compositional level of a unit $v$, denoted by $d(v)$, reflects its degree of nested composition, while the branching factor $b(v)$ quantifies the number of semantic units produced from the same compositional context. Let $L(v)$ denote the token length of unit $v$.
We define the unit-level conditional entropy as $H(v) := \sum_{i=l}^{r} H(t_i \mid t_{<i})$, where unit $v$ spans the token sequence $(t_l, \ldots, t_r)$.

\paragraph{Supporting Evidence for Assumptions.}
Assumption~\ref{assump:context} (Depth-Dependent Context Degradation) is supported by extensive empirical evidence. \citet{liu2024lost} demonstrate a ``lost in the middle'' phenomenon where language models exhibit significantly degraded performance when relevant information is positioned in the middle of long contexts, with a characteristic U-shaped performance curve favoring information at context boundaries. Furthermore, studies on attention patterns reveal that transformers allocate disproportionate attention to initial tokens (``attention sinks'')~\citep{xiao2024efficient}, and that effective context utilization degrades with distance even within the nominal context window. Recent work on intelligence degradation in long-context LLMs~\citep{wang2026intelligence} further confirms that model performance deteriorates progressively as context length increases beyond certain thresholds.

Assumption~\ref{assump:branch} (Branching Uncertainty) is motivated by recent findings on entropy-aware reasoning in language models. Studies on test-time compute~\citep{li2025entropy} show that high-entropy tokens correspond to critical decision points where prediction uncertainty is elevated, and that concentrating computation budget on such high-uncertainty moments improves reasoning quality. In the context of code, control-flow constructs (conditionals, loops) introduce structural ambiguity requiring the model to reason about multiple execution paths. Empirical studies confirm that LLM-generated code with deeper nesting and more comparisons exhibits higher error rates~\citep{sepidband2025enhancing}, and benchmarks specifically designed for nested structures~\citep{basu2025nestful,hu2025dynacode} consistently observe performance degradation as nesting complexity increases.

\subsection{Separation from Traditional Metrics}
\label{sec:separation}

We establish that \ourmetric captures semantic properties fundamentally different from cyclomatic complexity (CC).

\begin{proposition}[Separation from Cyclomatic Complexity]
\label{prop:cc-separation}
There exist code families $\{c_n^{\text{flat}}\}$ and $\{c_n^{\text{nested}}\}$ such that: (i) $\mathrm{CC}(c_n^{\text{flat}}) = \mathrm{CC}(c_n^{\text{nested}}) = n+1$; (ii) $|c_n^{\text{flat}}| = \Theta(|c_n^{\text{nested}}|)$; and (iii) $\ourmetric(c_n^{\text{flat}}) = \Theta(n)$ while $\ourmetric(c_n^{\text{nested}}) = \Theta(n^2)$.
\end{proposition}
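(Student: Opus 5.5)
We prove the proposition by explicit construction: two program families with the same number of decision points, differing only in nesting.

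\textbf{Construction.} Let $c_n^{\text{flat}}$ be a function body consisting of $n$ sequential, non-nested conditionals, the $k$-th of the form \texttt{if x$_k$ > 0: y += 1}. Let $c_n^{\text{nested}}$ be $n$ conditionals of the same shape, each placed inside the body of the previous one, with one statement per level.

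\textbf{Cyclomatic complexity.} Each \texttt{if} contributes exactly one decision node. The control-flow graph of either program has $P=1$, and by McCabe's formula $\mathrm{CC}=E-N+2P$ equals the number of binary decisions plus one. Hence $\mathrm{CC}=n+1$ for both families, giving (i).

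\textbf{Length.} Each level contributes a constant number of tokens, up to indentation. In $c_n^{\text{nested}}$ the indentation adds $\Theta(k)$ whitespace characters at level $k$. Two remedies are available:
\begin{itemize}
\item count length in tokens, with indentation as a single \textsc{Indent}/\textsc{Dedent} token per level, as the tokenizer effectively does for structure; or
\item use a brace language, where nesting costs $O(1)$ tokens per level.
\end{itemize}
Either way both programs have $\Theta(n)$ tokens, giving (ii).

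\textbf{Hierarchy and \ourmetric.} We next compute the semantic hierarchy that the decomposition of \S\ref{sec:hierarchy} produces, using the structural-delimiter rule (ii) together with indentation-based partitioning as in Algorithm~\ref{algo:metric}.

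For $c_n^{\text{flat}}$, the top unit has $\Theta(n)$ children, one per conditional. Each child has $O(1)$ sub-units at depth $O(1)$. So $\text{TotalCompLevel}=\sum_v d(v)=\Theta(n)$ and $\text{TotalBranch}=\sum_v b(v)=\Theta(n)$, the latter dominated by the root's fan-out $n$. Thus $\ourmetric=\alpha\,\text{TotalBranch}+(1-\alpha)\,\text{TotalCompLevel}=\Theta(n)$.

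For $c_n^{\text{nested}}$, the hierarchy is a chain (caterpillar). The unit for the $k$-th conditional sits at depth $k$ and carries $O(1)$ sibling units (guard, body statement, nested block). This gives $\text{TotalCompLevel}\ge\sum_{k=1}^n k=\Theta(n^2)$, while $\text{TotalBranch}=\Theta(n)$. Hence $\ourmetric=\Theta(n^2)$ whenever $\alpha<1$, giving (iii). For the matching upper bound, note that depth is at most $n+O(1)$ and the number of units is $\Theta(n)$.

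\textbf{Main obstacle: robustness to the entropy rule.} The segmentation also inserts boundaries wherever $H(t_i)>\tau$, and this depends on the language model. The proof must show these extra splits do not spoil the asymptotics.

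The plan is to argue that entropy-triggered boundaries only subdivide a unit into contiguous pieces at the same compositional stage. They never alter indentation-based nesting, so depths are determined by the structural skeleton. Since each level has $O(1)$ tokens, each level yields at most $O(1)$ extra units. In the worst case every token is a boundary, which multiplies unit counts per level by a constant without changing depths. Therefore TotalCompLevel and TotalBranch change by at most constant factors, and the $\Theta(n)$ versus $\Theta(n^2)$ separation survives for every $\tau$ and every entropy model.

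I would state this as a short lemma, "per-level unit count is $\Theta(1)$ and depth equals structural nesting depth up to $O(1)$," before doing the sums. The parameter $\alpha\in[0,1)$ must be assumed; the paper uses $\alpha=0.8$.
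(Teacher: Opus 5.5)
Your proposal is correct and follows essentially the same route as the paper: flat versus nested conditional families, $\mathrm{CC}=n+1$ from the decision count, length measured in tokens with indentation normalized, and summing compositional levels to get $\Theta(n)$ versus $\Theta(n^2)$. The paper uses if/else branches and gives explicit counts of $3n+1$ and $2n+1$ units. Your two additions are genuine improvements over the paper's proof. First, you argue that entropy-triggered boundaries add only $O(1)$ units per structural level and leave depths unchanged; the paper ignores these boundaries entirely. Second, you note that $\alpha<1$ is required: at $\alpha=1$ the metric reduces to TotalBranch, which is $\Theta(n)$ for both families, so (iii) would fail under the paper's stated range $\alpha\in[0,1]$.
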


\begin{proof}
\textbf{Construction of $c_n^{\text{flat}}$.} Consider $n$ sequential independent conditionals at the same nesting level:
\begin{verbatim}
if cond1: A1 else: B1
if cond2: A2 else: B2
...
if condn: An else: Bn
\end{verbatim}
This code has $n$ decision points, so $\text{CC} = n + 1$ by McCabe's formula.
For \ourmetric, the code decomposes into $1 + n + 2n = 3n + 1$ semantic units: one top-level unit, $n$ conditional units, and $2n$ branch bodies ($A_1, B_1, \ldots, A_n, B_n$), at compositional levels $1$, $2$, and $3$ respectively. Thus $\text{TotalCompLevel} = 1 + 2n + 6n = \Theta(n)$ and $\text{TotalBranch} = \Theta(n)$, yielding $\ourmetric(c_n^{\text{flat}}) = \Theta(n)$.

\textbf{Construction of $c_n^{\text{nested}}$.} Consider $n$ nested conditionals:
\begin{verbatim}
if cond1:
  if cond2:
    ...
      if condn: An else: Bn
    ...
  else: B2
else: B1
\end{verbatim}
This code also has $n$ decision points, so $\text{CC} = n + 1$.
For \ourmetric, the code decomposes into $2n + 1$ semantic units: $n$ conditional units at compositional levels $1, 2, \ldots, n$, $n-1$ else-branches ($B_1, \ldots, B_{n-1}$) each at level $i+1$ for the $i$-th conditional, and two terminal units ($A_n, B_n$) at level $n+1$. Thus $\text{TotalCompLevel} = \Theta(n^2)$ and $\text{TotalBranch} = \Theta(n)$, yielding $\ourmetric(c_n^{\text{nested}}) = \Theta(n^2)$.

\textbf{Code length.} The flat structure has length $O(n)$. The nested structure has length $O(n)$ for the code statements plus $O(n^2)$ for indentation whitespace. Since complexity metrics typically operate on token counts (not raw characters) and indentation tokens are often normalized, the effective lengths are $\Theta(n)$ in both cases.
\end{proof}

This separation has practical implications: code with identical cyclomatic complexity can exhibit vastly different LLM processing difficulty depending on hierarchical organization.
By Theorem~\ref{thm:entropy-accumulation}, deeper hierarchies induce greater entropy accumulation, so the $\Theta(n)$ vs.\ $\Theta(n^2)$ gap in \ourmetric reflects actual differences in LLM difficulty that CC cannot capture.

\begin{proposition}[Orthogonality to Code Length]
\label{prop:length-orthogonal}
For any code length $L$, there exist code samples $c_1$ and $c_2$ with $|c_1| = |c_2| = L$ such that $|\text{TotalCompLevel}(\mathcal{T}_1) - \text{TotalCompLevel}(\mathcal{T}_2)| = \Omega(L)$, where $\mathcal{T}_1$ and $\mathcal{T}_2$ denote the corresponding semantic hierarchies.
\end{proposition}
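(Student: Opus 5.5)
The plan is to prove Proposition~\ref{prop:length-orthogonal} by explicit construction, reusing the flat/nested templates from the proof of Proposition~\ref{prop:cc-separation}. The only change is that we now tune both families to have exactly the same token length $L$, not merely lengths of the same order. Code length is measured in tokens, with indentation normalized, exactly as in the length argument of Proposition~\ref{prop:cc-separation}.

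First fix a constant-size building block: a conditional with two branch bodies, \texttt{if cond: A else: B}. Let its normalized token count be a constant $k$.

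\begin{itemize}
\item For $c_1$, take $m=\lfloor L/k\rfloor$ such blocks placed sequentially at the top level.
\item For $c_2$, take the same $m$ conditionals nested inside one another, as in $c_n^{\text{nested}}$.
\end{itemize}

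Nesting only rearranges the same tokens, apart from indentation, which is normalized away. Hence both programs have the same token count $mk$.

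To make the lengths exactly $L$, append the remaining $L-mk<k$ tokens to both programs. Put them in an identical top-level padding statement, for example a straight-line assignment split into the required number of tokens. This padding adds only $O(1)$ units at level $O(1)$ to each hierarchy. If $L<k$, the claim is trivial because $\Omega(L)$ is then a constant; we can absorb this case by choosing the implied constant small, or by treating only $L$ above a fixed threshold.

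Next compute the compositional levels with the same accounting used in Proposition~\ref{prop:cc-separation}.

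\begin{itemize}
\item For $c_1$, every unit sits at level at most $3$, and there are $3m+O(1)$ units. So $\text{TotalCompLevel}(\mathcal{T}_1)=\Theta(m)$.
\item For $c_2$, the $i$-th conditional sits at level $i$ and its else-branch at level $i+1$. So $\text{TotalCompLevel}(\mathcal{T}_2)\ge\sum_{i=1}^m i=\Theta(m^2)$.
\end{itemize}

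The difference is therefore $\Theta(m^2)-\Theta(m)$. With $m=\Theta(L)$ this is $\Omega(L^2)$, which in particular is $\Omega(L)$.

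If we want a gap of exactly linear order, we can instead nest only $\sqrt{L}$ levels. The statement only asks for a lower bound, however, so the quadratic gap suffices.

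The main obstacle is justifying that the hierarchy built by the segmentation procedure really assigns these levels. Unit boundaries depend partly on entropy exceeding $\tau$, so high-entropy tokens inside the bodies $A_i,B_i$ could create extra units. To control this, I would first argue that structural delimiters alone already force the nesting skeleton above, since indentation-based partitioning creates one sub-unit per nested block. Extra entropy-induced splits only add units, and every added unit has nonnegative level. They therefore cannot decrease $\text{TotalCompLevel}(\mathcal{T}_2)$.

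For the flat program $c_1$, extra splits stay within levels $\le 3$, and there are at most $L$ of them. So $\text{TotalCompLevel}(\mathcal{T}_1)\le 3L+O(1)=O(L)$ in all cases. This bound is sufficient because the nested total is $\Omega(L^2)$, which dominates it.
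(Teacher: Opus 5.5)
Your proof is correct under the paper's own conventions, but it takes a different route from the paper.

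\textbf{The paper's route.} The paper works only with abstract hierarchies. It posits $k=\Theta(\sqrt{L})$ units of length $\Theta(\sqrt{L})$ and arranges them in one of two ways:
\begin{itemize}
\item all at level $1$, giving a total of $\Theta(\sqrt{L})$;
\item as a linear chain with levels $1,\dots,k$, giving a total of $k(k+1)/2=\Theta(L)$.
\end{itemize}
The gap is therefore exactly of order $L$. The paper never exhibits code realizing these hierarchies with length exactly $L$. It also never checks that the entropy-driven segmentation would actually produce them.

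\textbf{Your route.} You build concrete programs from $\Theta(L)$ constant-size if/else blocks. The chain then has depth $\Theta(L)$ and the gap is $\Theta(L^2)$. You fix the lengths exactly by padding. The quadratic gap is what makes your crude robustness argument work: entropy-induced splits cannot lower the nested total, and they keep the flat total at $O(L)$, so arbitrary entropy-induced boundaries are absorbed. At the paper's linear scale there is no such slack. The $\sqrt{L}$-depth variant you mention would therefore lose this robustness.

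\textbf{The price of depth $\Theta(L)$.} Equal length rests entirely on the indentation-normalization convention borrowed from Proposition~\ref{prop:cc-separation}. If leading whitespace were tokenized, your nested program would carry $\Theta(L^2)$ extra tokens. A $\Theta(\sqrt{L})$-depth chain like the paper's incurs only $O(L)$ indentation overhead, assuming $O(1)$ lines per unit.

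\textbf{Two small repairs.}
\begin{itemize}
\item The paper's $c_n^{\text{nested}}$ template drops $A_1,\dots,A_{n-1}$. To make the token counts literally equal, place block $i+1$ in the then-branch after $A_i$. Alternatively, pad $c_2$; appended padding leaves earlier entropies unchanged and only raises its total.
\item For $L<k$, shrinking the implied constant does not help. Both programs are then identical padding and the gap is $0$. You should instead rely on the large-$L$ reading of $\Omega(\cdot)$.
\end{itemize}
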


\begin{proof}
We construct two code samples of length $L$ tokens, each partitioned into $k = \Theta(\sqrt{L})$ semantic units of average length $\Theta(\sqrt{L})$.

\textbf{Flat structure ($c_1$).} All $k$ units reside at compositional level 1, yielding $\text{TotalCompLevel}(\mathcal{T}_1) = k = \Theta(\sqrt{L})$.

\textbf{Chain structure ($c_2$).} Units are arranged in a linear compositional chain with levels $1, 2, \ldots, k$, yielding $\text{TotalCompLevel}(\mathcal{T}_2) = \sum_{i=1}^{k} i = k(k+1)/2 = \Theta(L)$.

Thus, $|\text{TotalCompLevel}(\mathcal{T}_1) - \text{TotalCompLevel}(\mathcal{T}_2)| = \Theta(L) - \Theta(\sqrt{L}) = \Omega(L)$.
\end{proof}

These results explain why cyclomatic complexity loses predictive power under length control: CC counts decision points without encoding their hierarchical arrangement, and empirically $\text{CC}(c) \approx \beta_0 + \beta_1 |c|$, so length control eliminates the signal. In contrast, \ourmetric captures hierarchical organization—a dimension orthogonal to length.

\subsection{Entropy-driven Segmentation}
\label{sec:entropy-seg}

We provide theoretical justification for using entropy thresholds to identify semantic unit boundaries.

\begin{proposition}[Entropy Thresholding as Boundary Detection]
\label{prop:entropy-boundary}
Suppose the entropy sequence $\{H_i\}_{i=1}^n$ exhibits a bimodal structure: tokens within semantic units have entropy concentrated around a low level $\mu_L$, while boundary tokens have entropy concentrated around a high level $\mu_H > \mu_L$, each with noise bounded by $\eta$. That is, $H_i \in [\mu_L - \eta, \mu_L + \eta]$ for interior tokens and $H_i \in [\mu_H - \eta, \mu_H + \eta]$ for boundary tokens. If the gap satisfies $\mu_H - \mu_L > 2\eta$, then choosing threshold $\tau = (\mu_L + \mu_H)/2$ correctly classifies all tokens as interior or boundary.
\end{proposition}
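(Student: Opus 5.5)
The plan is a direct two-case verification. The classifier labels token $i$ as a boundary exactly when $H_i > \tau$, which matches criterion (i) of the semantic unit definition in \S~\ref{sec:formal-setup}. With $\tau = (\mu_L + \mu_H)/2$, I will show that every interior token falls strictly below $\tau$ and every boundary token strictly above it. No earlier result is needed; the argument uses only the bounded-noise hypothesis and the gap condition.

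\emph{Interior tokens.} For an interior token, the hypothesis gives $H_i \le \mu_L + \eta$. The gap condition $\mu_H - \mu_L > 2\eta$ is equivalent to $\eta < (\mu_H - \mu_L)/2$. Therefore
\[
H_i \le \mu_L + \eta < \mu_L + \tfrac{\mu_H - \mu_L}{2} = \tau .
\]
So $H_i \le \tau$ fails to hold only if $H_i > \tau$, which is impossible here, and the token is labeled interior.

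\emph{Boundary tokens.} Symmetrically, for a boundary token we have
\[
H_i \ge \mu_H - \eta > \mu_H - \tfrac{\mu_H - \mu_L}{2} = \tau ,
\]
so $H_i > \tau$ and the token is labeled a boundary.

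The two cases are exhaustive, since every token is by hypothesis either interior or boundary. Hence the threshold rule classifies every token correctly.

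The only point needing care is strictness. The gap inequality is strict, so both comparisons with $\tau$ are strict. This matters for interior tokens, because the convention is that a boundary occurs when $H_i > \tau$, and a tie would not cause an error in any case. I will also remark that any $\tau$ in the open interval $(\mu_L + \eta,\, \mu_H - \eta)$ works equally well, and this interval is nonempty precisely because of the gap condition; the midpoint is simply the choice that maximizes the margin on both sides. Finally, I will note that the proposition concerns only the entropy criterion. Tokens that are syntactic delimiters are handled separately by criterion (ii), so they do not affect this argument.
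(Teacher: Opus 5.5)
Your proof is correct and follows the paper's argument: interior tokens satisfy $H_i \le \mu_L+\eta<\tau$ and boundary tokens satisfy $H_i \ge \mu_H-\eta>\tau$, both via $\eta<(\mu_H-\mu_L)/2$. One slip in your side remark: under the rule ``boundary iff $H_i>\tau$'', strictness protects the \emph{boundary} tokens (a boundary token with $H_i=\tau$ would be labeled interior), not the interior ones, so a tie would cause an error; your strict inequalities exclude ties, so the proof itself is unaffected.
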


\begin{proof}
For interior tokens: $H_i \leq \mu_L + \eta < (\mu_L + \mu_H)/2 = \tau$ (since $\mu_H - \mu_L > 2\eta$ implies $\mu_L + \eta < \mu_H - \eta$ and thus $\mu_L + \eta < (\mu_L + \mu_H)/2$).

For boundary tokens: $H_i \geq \mu_H - \eta > (\mu_L + \mu_H)/2 = \tau$ (by symmetric argument).

Thus, the threshold $\tau$ achieves perfect separation under the bimodal assumption.
\end{proof}

\begin{remark}
In practice, the threshold $\tau$ is selected empirically (e.g., using a quantile of the observed entropy distribution) rather than computed from unknown population parameters. Following~\citet{wang2025beyond}, we set $\tau$ at the 67th percentile of token entropies, which provides robust boundary detection across diverse code samples.
\end{remark}

This result formalizes the intuition that entropy spikes mark semantic boundaries. Empirical studies of token-level entropy in code~\citep{cooper2024perplexed,shi2025longcodezip} confirm that code entropy exhibits approximately bimodal behavior, with elevated values at control-flow transitions, function boundaries, and other structural discontinuities, justifying the threshold-based segmentation approach.

\section{Subgroup-based Correlation Analysis}
\label{sec:correlation-analysis}

To assess the correlation between code complexity metrics and LLM performance while mitigating sensitivity to individual samples, we adopt a subgroup-based correlation analysis. 
Specifically, samples are first ranked according to the feature or metric of interest and then partitioned into approximately ten equal-sized groups.

Within each group, we take the median feature (or metric) value as the representative statistic, which provides robustness against outliers and extreme cases in code structure or length. 
For task performance, we use the mean score within each group, as the median may degenerate to binary values (e.g., all 0 or all 1) in groups corresponding to particularly difficult or trivial instances, thereby obscuring performance variation.

We then compute the Spearman rank correlation coefficient $r$ between the group-level feature values and performance scores. 
To further mitigate the effects of grouping sensitivity and sample-level randomness, we repeat the analysis with the number of groups ranging from 9 to 11, and report the statistically significant result ($p < 0.05$) with the largest absolute correlation magnitude. 
This procedure yields more stable and reliable correlation estimates by attenuating the influence of individual sample variance and grouping artifacts.

\section {Semantics-Preserving Code Rewriting for LM-CC Reduction}
\label{sec:code-rewrite}

We conduct semantics-preserving code rewriting to directly reduce \ourmetric and examine its causal impact on model performance.
Specifically, we instruct an LLM to simplify each code example, and compute \ourmetric and cyclomatic complexity (CC) before and after rewriting. We retain only samples for which CC remains unchanged while \ourmetric decreases, ensuring that structural complexity is held constant. Semantic equivalence is verified using test cases: rewritten samples for code translation and execution reasoning are required to pass all tests, whereas those for program repair are required to fail the tests, consistent with task definitions.
We employ \textsc{DeepSeek-V3.2} for code rewriting with n=10 attempts and a temperature of 1. The prompt used for rewriting is provided below.

\begin{tcolorbox}[colback=gray!6, colframe=black!80, title=Prompt for Code Rewriting]
You are an expert in Python programming.
\\

\textbf{Instruction:}

- Improve the code's comprehensibility and readability without altering its original functionality.

- Do not modify the function's cyclomatic complexity (e.g., do not add or remove loops or conditional branches).

- Output only the simplified code inside 
\textasciigrave \textasciigrave \textasciigrave python \textasciigrave \textasciigrave \textasciigrave 
without any additional explanations.
\\

\textbf{Initial Code:} 

\{code\}
\\

\textbf{Simplified Code:}

\end{tcolorbox}

To exclude simple samples with limited simplification potential, we further filter the samples where the original \ourmetric ranked in the top 50\% (and in the top 60\% for the execution reasoning task).
This yields valid subsets of 58, 99, and 60 samples for program repair, code translation, and execution reasoning, respectively.

\section{Additional Experimental Results}
This section provides supplementary experimental analyses that further validate the effectiveness, robustness, and practical properties of \ourmetric.
We first report additional evaluation results using GPT-4o-mini and Qwen3.5-122B as task-performing models (\S~\ref{sec:main-results-sup}), followed by ablation studies examining the contribution of the hierarchical decomposition strategy and the weighting factor $\alpha$ (\S~\ref{sec:ablation}). Finally, we present a cost analysis of \ourmetric and discuss its computational efficiency in practical deployment settings (\S~\ref{sec:cost}).

\subsection{Supplementary Results on GPT-4o-mini and Qwen3.5-122B}
\label{sec:main-results-sup}

In addition to \textsc{DeepSeek-V3}, we further evaluate \ourmetric on two additional task-performing models, \textsc{GPT-4o-mini} and \textsc{Qwen3.5-122B}, using the same experimental setup as described in the main text.
The results, summarized in Table~\ref{tab:main_results_sup}, are highly consistent with those reported in Section~\ref{sec:experiments}, further supporting the effectiveness and generalizability of \ourmetric.

Specifically, \ourmetric achieves a perfect 16/16 significant correlation rate across all evaluated settings, consistently outperforming its two component dimensions, TotalCompLevel and TotalBranch.
Nevertheless, the effectiveness of different entropy estimation models varies across tasks and task-performing LLMs.
When \textsc{GPT-4o-mini} is adopted, \textsc{deepseek-coder-6.7b-base} and \textsc{Phi-3-mini-4k-instruct} achieve the strongest numerical performance across most tasks, whereas \textsc{CodeLlama-7b-hf} and \textsc{Qwen2.5-Coder-1.5B} perform competitively only on the code translation task.
In contrast, when \textsc{Qwen3.5-122B} is employed, nearly all entropy models obtain strong results on program repair and code translation, but exhibit relatively weak numerical performance on the execution reasoning task.

\begin{table*}[t]
\centering
\caption{Partial Spearman correlations ($r$) of complexity features and \ourmetric with pass@1 across code tasks using \textsc{GPT-4o-mini} and \textsc{Qwen3.5-122B}, while controlling for code length. ``--'' indicates non-significant correlations ($p \geq 0.05$).}
\label{tab:main_results_sup}
\resizebox{\textwidth}{!}{
\begin{tabular}{llccccccc}
\toprule
\multirow{2}{*}{\textbf{Entropy Model}} & \multirow{2}{*}{\textbf{Code Task}}&  \multicolumn{3}{c}{\textbf{Compositional Level}} & \multicolumn{3}{c}{\textbf{Branching Factor}} & \multirow{2}{*}{\textbf{\ourmetric}}\\
        \cmidrule(lr){3-5} \cmidrule(lr){6-8}
& & \textbf{MaxCompLevel}  & \textbf{AvgCompLevel}  & \textbf{TotalCompLevel} & \textbf{MaxBranch} & \textbf{AvgBranch} & \textbf{TotalBranch} & \\
\midrule

\rowcolor{gray!35} \multicolumn{9}{c}{\textbf{\textsc{GPT-4o-mini}}} \\

\midrule
\multirow{4}{*}{CodeLlama-7b}
& Program Repair      & - & - & \textbf{-0.93} & -  & -  & -0.83  & -0.85 \\
& Code Translation    & -0.67 & - & -0.93 & -0.92 & -0.80     & \textbf{-0.97} & \textbf{-0.97} \\
& Execution Reasoning & - & \textbf{-0.96} & -0.94 & - & -0.83     & -0.92 & -0.94 \\
& Code Summarization & -0.81 & -0.97 & -0.99 & \textbf{-1.0} & -  & -0.99 & -0.99 \\

\midrule
\multirow{4}{*}{Phi-3-mini-4k-instruct}
& Program Repair      & - & -0.75 & -0.88 & -  & -  & -0.69  & \textbf{-0.9} \\
& Code Translation    & - & - & -0.95 & -0.95 & -0.95     & \textbf{-0.98} & \textbf{-0.98} \\
& Execution Reasoning & - & -0.93 & -0.95 & -0.76 & -0.64     & -0.93 & \textbf{-0.98} \\
& Code Summarization & -0.95 & - & -0.99 & \textbf{-1.0} & -  & -0.99 & -0.99 \\

\midrule
\multirow{4}{*}{Qwen2.5-Coder-1.5B}
& Program Repair      & - & -0.80 & \textbf{-0.92} & -  & -  & -  & -0.80 \\
& Code Translation    & - & -0.75 & -0.94 & -0.91 & -0.88     & -0.96 & \textbf{-0.98} \\
& Execution Reasoning & -0.71 & \textbf{-0.90} & -0.88 & - & -     & -0.87 & -0.87 \\
& Code Summarization & -0.81 & -0.93 & -0.99 & \textbf{-1.0} & -  & -0.99 & -0.99 \\

\midrule
\multirow{4}{*}{deepseek-coder-6.7b-base}
& Program Repair      & - & -0.83 & -0.73 & -  & -  & -0.77  & \textbf{-0.91} \\
& Code Translation    & - & - & -0.94 & -0.94 & -0.88     & -0.93 & \textbf{-0.95} \\
& Execution Reasoning & -0.72 & -0.96 & -0.93 & -0.92 & -0.82     & \textbf{-0.98} & \textbf{-0.98} \\
& Code Summarization & -0.92 & -0.95 & -0.99 & \textbf{-1.0} & -  & -0.99 & -0.99 \\

\midrule
\rowcolor{gray!12} \multicolumn{2}{c}{\textit{Significant Correlation Ratio}}  & 7/16 & 11/16 & \textbf{16/16} & 10/16 & 7/16 & 15/16 & \textbf{16/16}  \\

\midrule
\rowcolor{gray!35} \multicolumn{9}{c}{\textbf{\textsc{Qwen3.5-122B}}} \\

\midrule
\multirow{4}{*}{CodeLlama-7b}
& Program Repair      & - & - & -0.82 & -  & -  & -0.85  & \textbf{-0.98} \\
& Code Translation    & -0.85 & - & -0.90 & -0.77 & -     & -0.85 & \textbf{-0.93} \\
& Execution Reasoning & - & -0.72 & -0.79 & - & \textbf{-0.82}     & -0.72 & -0.79 \\
& Code Summarization & -0.87  & -0.92 & -0.99 & \textbf{-1.0} & -  & -0.99 & -0.99 \\

\midrule
\multirow{4}{*}{Phi-3-mini-4k-instruct}
& Program Repair      & -0.68 & -0.75 & -0.95 & -0.79  & -  & -  & \textbf{-0.97} \\
& Code Translation    & - & - & \textbf{-0.88} & -0.82 & \textbf{-0.88}     & -0.84 & \textbf{-0.88} \\
& Execution Reasoning & -0.71 & - & -0.81 & \textbf{-0.89} & -0.80     & -0.80 & -0.76 \\
& Code Summarization & -0.95 & -0.74 & \textbf{-0.99} & -0.97 & -  & \textbf{-0.99} & \textbf{-0.99} \\

\midrule
\multirow{4}{*}{Qwen2.5-Coder-1.5B}
& Program Repair      & -0.85 & -0.68 & -0.92 & -  & -  & -  & \textbf{-0.98} \\
& Code Translation    & - & - & -0.76 & -0.69 & -     & -0.75 & \textbf{-0.77} \\
& Execution Reasoning & - & -0.71 & -0.80 & - & \textbf{-0.83}     & -0.82 & -0.80 \\
& Code Summarization & -0.9 & -0.87 & \textbf{-0.99} & -0.93 & -  & \textbf{-0.99} & \textbf{-0.99} \\

\midrule
\multirow{4}{*}{deepseek-coder-6.7b-base}
& Program Repair      & - & - & - & -  & -  & -0.90  & \textbf{-0.98} \\
& Code Translation    & - & - & \textbf{-0.88} & -0.74 & -     & -0.78 & -0.79 \\
& Execution Reasoning & - & \textbf{-0.84} & \textbf{-0.84} & - & -     & -0.80 & -0.71 \\
& Code Summarization & -0.95 & -0.88 & \textbf{-0.99} & -0.98 & -  & -0.99 & -0.96 \\

\midrule
\rowcolor{gray!12} \multicolumn{2}{c}{\textit{Significant Correlation Ratio}}  & 8/16 & 9/16 & 15/16 & 10/16 & 4/16 & 14/16 & \textbf{16/16} \\

\bottomrule
\end{tabular}
}
\end{table*}

\subsection{Ablation Study}
\label{sec:ablation}

\paragraph{Ablation on the Hierarchical Decomposition Strategy.}
The entropy-driven decomposition strategy in \ourmetric integrates token-level entropy with syntax-aware structural delimiters to construct a semantic compositional hierarchy over source code. To systematically evaluate the effectiveness and necessity of these two design elements, we conduct an ablation study by removing each one in turn.  The results are presented in Table~\ref{tab:ablation_results}.

\begin{table*}[htbp]
\centering
\caption{Ablation results on decomposition strategies for entropy and syntactic delimiter. ``--'' indicates non-significant correlations ($p \geq 0.05$). PR, CT, and ER stand for the three tasks of program repair, code translation, and execution reasoning, respectively.}
\label{tab:ablation_results}

\begin{tabular}{lccccccccc}
\toprule
\multirow{2}{*}{\textbf{Strategy}} & \multicolumn{3}{c}{\textbf{DeepSeek-V3}} & \multicolumn{3}{c}{\textbf{GPT-4o-mini}} & \multicolumn{3}{c}{\textbf{Qwen3.5-122b}} \\
\cmidrule(lr){2-4} \cmidrule(lr){5-7} \cmidrule(lr){8-10}
& \textbf{PR} & \textbf{CT} & \textbf{ER} & \textbf{PR} & \textbf{CT} & \textbf{ER} & \textbf{PR} & \textbf{CT} & \textbf{ER} \\
\midrule
Syntax-only & -0.88 & -0.78 & - & -0.85 & -0.93 & -0.78 & \textbf{-0.99} & -0.93 & -0.71 \\
Entropy-only & -0.72 & -0.95 & - & \textbf{-0.93} & \textbf{-0.99} & -0.90 & - & -0.84 & - \\
LM-CC & \textbf{-0.93} & \textbf{-0.97} & \textbf{-0.92} & -0.85 & -0.97 & \textbf{-0.94} & -0.98 & \textbf{-0.93} & \textbf{-0.79} \\

\bottomrule
\end{tabular}

\end{table*}

The results show that removing either entropy-guided boundaries (\textit{syntax-only}) or syntax-guided hierarchical structuring (\textit{entropy-only}) consistently reduces the stability of the correlation, and under certain experimental settings, the resulting coefficients fail to reach statistical significance. In contrast, \ourmetric, which jointly incorporates both design elements, achieves more stable and consistently competitive performance across configurations.

It should be noted that while depth and branching are well-established in traditional code complexity analysis, our key contribution is redefining them from an LLM-centric perspective. Traditional metrics use syntax-driven structures and reflect human-oriented complexity.
In contrast, LM-CC derives semantic units via entropy measurements and computes depth and branching scores over this LLM-aligned hierarchy, thereby capturing both the inherent uncertainty of LLM predictions and the compositional nonlinearity of code processing. Our ablation experiments confirm that syntax-based baselines with the same aggregation logic achieve inferior performance, proving that the improvement originates from LLM-driven hierarchical decomposition rather than the choice of complexity dimensions.

\paragraph{Ablation on the Weighting Factor $\alpha$.}

To investigate the impact of the weighting factor $\alpha$, we conduct an ablation study by computing \ourmetric under a range of $\alpha$ values.
The results are shown in Figure~\ref{fig:ablation}. 
As $\alpha$ increases, the effectiveness of \ourmetric remains relatively stable for code translation. For execution reasoning, it exhibits a mild rise followed by a gradual decline with limited variation, whereas for program repair it shows an increase followed by a sharper drop, indicating substantially higher sensitivity to $\alpha$.
Based on these observations, we set $\alpha = 0.8$, which yields consistently strong performance across all three tasks and strikes a favorable balance between the two structural features captured by \ourmetric.

\begin{figure}[t]
    \centering
    \includegraphics[width=0.5\columnwidth]{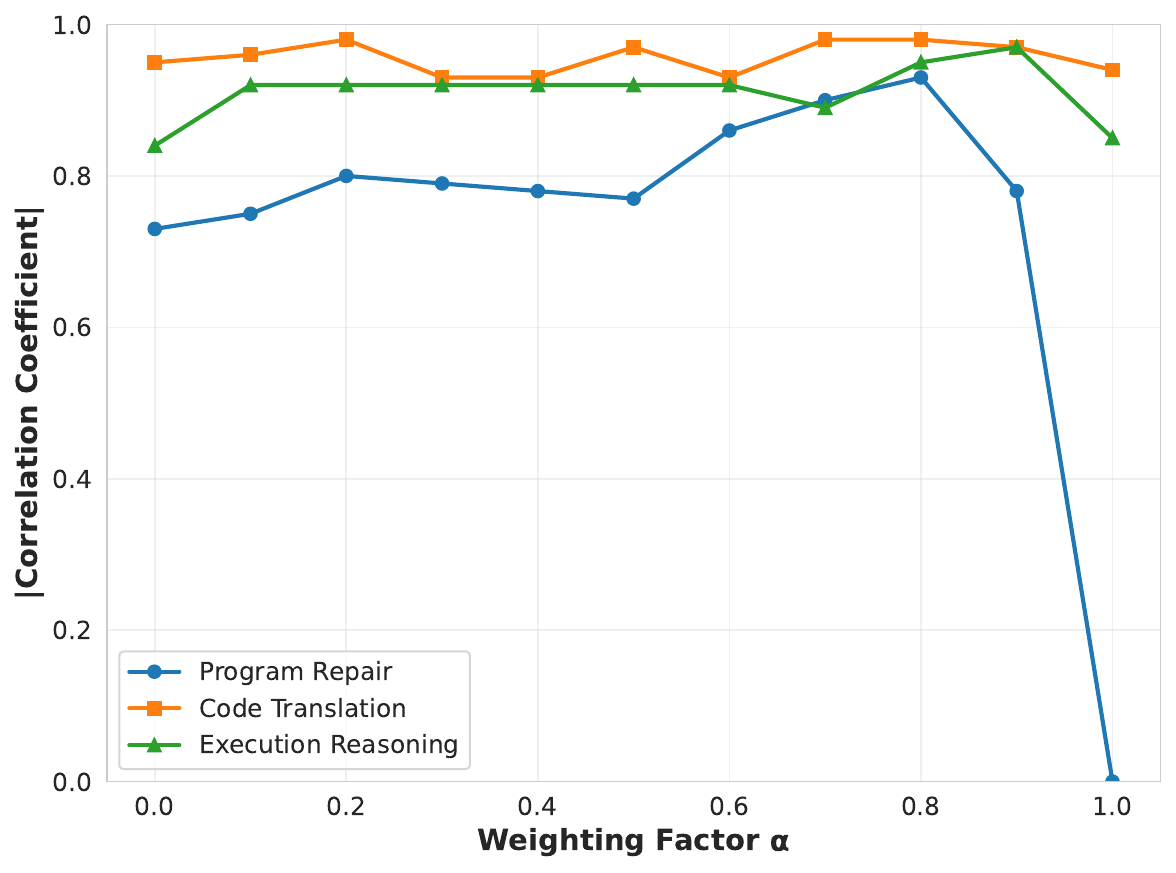}
    \caption{Ablation on the weighting factor $\alpha$ in \ourmetric. Performance peaks at intermediate $\alpha$ values, while hierarchy-only ($\alpha\!\to\!0$) and branching-only ($\alpha\!\to\!1$) configurations perform substantially worse.}
    \label{fig:ablation}
\end{figure}

\subsection{Cost Analysis}
\label{sec:cost}

\begin{table*}[t]
\centering
\caption{Comparison of average per-sample computation time between CC and LM-CC across different code tasks.}
\label{tab:cost_analysis}
\setlength{\tabcolsep}{12pt}
\begin{tabular}{lcc c}
\toprule
\multirow{2}{*}{\textbf{Task}} & \multirow{2}{*}{\textbf{Avg LoC}} & \multicolumn{2}{c}{\,\,\,\textbf{Avg Time/Sample}\,\,\,} \\
\cmidrule(lr){3-4}
& & \textbf{CC} & \textbf{LM-CC} \\
\midrule
Program Repair & 61.0 & 0.9ms & 0.7s \\
Code Translation & 20.8 & 0.4ms & 0.46s \\
Execution Reasoning & 8.4 & 0.1ms & 0.02s \\
Code Summarization & 16.9 & 0.2ms & 0.44s \\
\bottomrule
\end{tabular}

\end{table*}
LM-CC computes token-level entropy via a single forward pass, eliminating the need for autoregressive decoding. As a result, its computational complexity is linear in the input sequence length (O(n) tokens). Table~\ref{tab:cost_analysis} summarizes the average per-sample computation time for both Cyclomatic Complexity (CC) (0.1–0.9 ms/sample) and LM-CC (0.02–0.7 s/sample) across all four evaluation tasks. 
Although LM-CC incurs higher latency than lightweight static analysis metrics, it remains operationally feasible. It supports full parallelization via mini-batching and delivers far more expressive semantic representations, making it an attractive choice for large-scale empirical studies.

\section{Implications of \ourmetric}
\label{sec:implications}

The proposed \ourmetric enables several potential applications in LLM-centered software engineering and AI-assisted programming:

\textbf{LLM-Aware Code Evaluation and Benchmark Design}:
\ourmetric allows benchmarks and datasets to be stratified according to difficulty as perceived by large language models. By explicitly controlling for LLM-perceived complexity, it enables more faithful assessment of model capabilities, supports fairer cross-model comparisons, and facilitates finer-grained analysis of structural failure modes that are not captured by code length or traditional complexity measures.

\textbf{LLM-Guided Code Refactoring and Simplification}:
\ourmetric provides a principled signal for semantics-preserving code transformations that target reductions in LLM-perceived complexity. In contrast to traditional refactoring, which primarily optimizes for human readability or maintainability, model-aware refactoring restructures semantic compositional patterns to improve LLM performance without altering program behavior. This is particularly relevant for tasks such as program repair, code translation, code execution reasoning, and code understanding.

\textbf{Adaptive Reasoning and Tool Invocation}:
\ourmetric can serve as a control signal for dynamically selecting reasoning strategies in LLM-based systems. Code segments with higher perceived complexity may trigger structured prompting, explicit step-by-step reasoning, or external tool invocation (e.g., execution or symbolic analysis), whereas simpler code can be handled with lightweight generation, enabling cost- and performance-aware orchestration.

\textbf{Training Data Curation and Curriculum Learning}:
By quantifying code difficulty from the model’s perspective, \ourmetric can guide the construction of training curricula that better align with LLM-specific processing challenges. Ordering or weighting samples by model-aware complexity may improve training efficiency, robustness, and generalization compared to curricula based solely on syntactic properties or code length.

\textbf{Failure Analysis and Model Diagnosis}:
By explicitly capturing structural sources of difficulty such as deep composition and high branching intensity of semantic units, \ourmetric supports systematic diagnosis of LLM failures on code tasks. It helps explain performance degradation on complex programs—even when they are short—offering interpretable insights into how program semantic composition interacts with model behavior.

\end{document}